\newtheorem*{remark}{Remark}
\DeclareMathOperator*{\argmin}{arg\,min}
\newtheorem{theorem}{Theorem}[section]
\newtheorem{proposition}[theorem]{Proposition}
\newtheorem{lemma}[theorem]{Lemma}
\newtheorem{corollary}[theorem]{Corollary}
 \theoremstyle{definition} 
 \newtheorem{definition}[theorem]{Definition}
\theoremstyle{definition}
\newtheorem{example}[theorem]{Example}
\renewcommand{\P}{\mathcal{P}}
\title{A survey of the monotonicity and non-contradiction of consensus methods and supertree methods}
\author[1]{Mareike Fischer\thanks{\url{mareike.fischer@uni-greifswald.de, email@mareikefischer.de}}}
\author[2]{Michael Hendriksen \thanks{\url{m.hendriksen@unsw.edu.au}}}
\affil[1]{Institute of Mathematics and Computer Science, University of Greifswald, Greifswald, Germany}
\affil[2]{School of Mathematics and Statistics \protect \\University of New South Wales, Australia}
\begin{document}
\maketitle

\begin{abstract}
In a recent study, Bryant, Francis and Steel investigated the concept of \enquote{future-proofing} consensus methods in phylogenetics. That is, they investigated if such methods can be robust against the introduction of additional data like added trees or new species. 
In the present manuscript, we analyze consensus methods under a different aspect of introducing new data, namely concerning the discovery of new clades. In evolutionary biology, often formerly unresolved clades get resolved by refined reconstruction methods or new genetic data analyses. In our manuscript we investigate which properties of consensus methods can guarantee that such new insights do not disagree with previously found consensus trees, but merely refine them, a property termed \emph{monotonicity}. Along the lines of analyzing monotonicity, we also study two {established} supertree methods, namely Matrix Representation with Parsimony (MRP) and Matrix Representation with Compatibility (MRC), which have also been suggested as consensus methods in the literature. While we (just like Bryant, Francis and Steel in their recent study) unfortunately have to conclude some negative answers concerning general consensus methods, we also state some relevant and positive results concerning the majority rule ($\mathtt{MR}$) and strict consensus methods, which are amongst the most frequently used consensus methods. Moreover, we show that there exist infinitely many consensus methods which are monotonic and have some other desirable properties. 

\textbf{Keywords:} consensus tree, phylogenetics, majority rule, tree refinement, matrix representation with parsimony

\textbf{MSC:} C92B05, 05C05

\end{abstract}

\section{Introduction}

In phylogenetics, consensus methods play a fundamental role concerning tree reconstruction and are therefore often used in biological studies (cf.  { \cite{oreilly,carling,geisler}}). For instance, when different genes of the same set of species lead to different gene trees or when different tree reconstruction methods come to different results, it may be hard to decide which of the given trees is the \enquote{true} tree in the sense of coinciding with the underlying (unknown) species tree. This is where consensus methods come into play --- they use certain rules to summarize a set of trees to form a \emph{consensus tree}. There are various such methods used in biology, which is why it makes sense to classify them according to their properties, some of which are desirable, while others might be undesirable. Several approaches in this regard have been made in the literature. Most prominently, Day and McMorris gave an overview of certain axioms and paradigms used to categorize consensus methods \cite{day2003axiomatic}. This axiomatic approach seems to have been inspired by the famous Arrovian result on weak orders \cite{arrow1950difficulty}, then considered in the area of partitions by Mirkin \cite{mirkin1975problem}, before finally being considered in more tree-like contexts, outlined by McMorris and Powers in \cite{mcmorris2008characterization}. One popular axiom (e.g. discussed in \cite{day2003axiomatic,mcmorris2008characterization,mcmorris}) is \emph{monotonicity}, which was originally introduced in \cite{mcmorris} as an example for a desirable property of consensus methods \cite[Chapter 1.2]{day2003axiomatic} and which we will re-visit in the present manuscript, along with a new criterion which we will coin \emph{non-contradiction}.

As stated above, differing consensus methods can lead to differing consensus trees. However, even if you stick to one consensus method, its outcome might change when new input data are discovered. This is why it is important to determine how \enquote{future-proof} consensus methods are. In a recent study \cite{bryant2017can}, Bryant, Francis and Steel investigated the properties of consensus methods in an axiomatic manner (similar to the approach suggested in \cite{day2003axiomatic}), proposing three simple conditions that a consensus method should obey, referring to any such method as \emph{regular}. The purpose of the article was to investigate how \enquote{future-proof}  consensus methods are. In particular, the authors investigated associative stability (i.e., robustness against the introduction of additional trees) as well as extension stability (i.e., robustness against the introduction of additional species). Unfortunately, the authors found that such future-proofing so-called regular methods (which we will formally define later on) is impossible, as these consensus methods cannot be extension stable \cite{bryant2017can}. Moreover,  while regularity and associative stability are possible at the same time, a consensus method cannot be regular, associatively stable and Pareto (another desirable property) on so-called rooted triples \cite{delucchi2019}. In summary, the combination of certain desirable properties is not possible in any consensus method.

In the present paper we investigate a related question -- can a consensus method be robust against refinement of the input trees? This question {is relevant}, as often formerly unresolved clades in known phylogenetic trees get resolved by new genetic analyses or refined tree reconstruction methods. This implies that a given set of input trees might be changed in the sense that new clades are added rather than new species or entirely new trees as in \cite{bryant2017can}, and the main purpose of the present manuscript is to analyze the impact of this scenario on consensus methods. In the literature, consensus methods which have the property that refining some of the input trees might refine the consensus tree, but not generate a new consensus tree contradicting the previous one, are referred to as \emph{monotonic}. 

In addition to monotonicity of consensus methods, we also study two supertree methods, namely Matrix Representation with Parsimony (MRP) and Matrix Representation with Compatibility (MRC). These methods are sometimes considered as consensus methods in the literature in cases in which all input trees employ the same set of leaves. However, there is some ongoing debate about the usefulness of supertree methods in the context of consensus trees \cite{binindabook,bininda2003}, because these methods do not necessarily lead to a unique tree and thus in some cases require another consensus method to summarize all resulting trees. However, we will show in the present manuscript that even in the ideal case in which these methods do lead to a unique tree, they are not future-proof -- neither in the sense of adding new species (thus complementing the study of \cite{bryant2017can}) nor in the sense of resolving new clades.

While the above mentioned findings might be considered \enquote{bad news}, we also show that majority rule consensus methods (including the strict consensus) are indeed monotonic, i.e., they are robust against the new resolution of formerly unresolved input trees. Moreover, {we show} that there exist infinitely many consensus methods which are regular and monotonic, and even \emph{non-contradictory} (which is yet another desirable property we will define subsequently) -- even if the class of such methods does not contain some of the established methods like loose consensus or Adams consensus. This paves the way for future research, namely concerning the search for new consensus methods which have these desirable properties and are biologically plausible. Such methods could have huge potential in replacing some of the existing consensus methods.

\section{Preliminaries}

Before we can present our results, we need to formally introduce the most important concepts discussed in this manuscript. 

\subsection{Basic phylogenetic concepts}  \label{sec:basicConcepts}

In the following, let $X$ be a finite set, typically a set of \emph{taxa} or \emph{species}, but for simplicity, we may also assume without loss of generality that, whenever $|X|=n$, we have $X=\{1,\ldots,n\}$. A phylogenetic $X$-tree $T$ is a connected acyclic graph whose leaves are bijectively labelled by the elements of $X$. If there is one distinguished node $\rho$ referred to as the \emph{root}  of the tree, then $T$ is called \emph{rooted}; otherwise $T$ is called \emph{unrooted}. Trees are often denoted in the (non-unique) \emph{Newick format} (cf. \cite{newick}), which uses nested brackets in such a way that two closely related species are grouped closely together, cf. Figure \ref{fig:all4taxontrees}. Typically, the uppermost level of the nesting is supposed to refer to the root if $T$ is rooted. Possible Newick formats for several unrooted trees can be found in the caption of  Figure \ref{fig:all4taxontrees}. The rooted trees from Figure \ref{AdamsAho} can be denoted in the Newick format as follows: $(((1,2),3),(4,(5,6)))$ denotes $T_1$, $(((1,5),4),(3,(2,6)))$ denotes $T_2$, and $((2,3),1,6,(4,5))$ denotes $T_3$.

Let $RP(X)$ and $UP(X)$ denote the set of rooted and unrooted phylogenetic trees on $X$, respectively. Then, a \emph{profile} of trees is an ordered tuple $(T_1,...,T_k)$ of trees such that $T_1,...,T_k \in RP(X)$ or $T_1,...,T_k \in UP(X)$ (that is, the trees in a profile must all be rooted or all be unrooted, and they must all refer to the same taxon set $X$). 

We now first turn our attention to unrooted trees. A bipartition $\sigma$ of $X$ into two non-empty and disjoint subsets $A$ and $B$ is called an \emph{$X$-split} (or \emph{split} for short), and is denoted by $\sigma=A|B$. There is a natural relationship between $X$-splits and the edges of an unrooted phylogenetic $X$-tree $T$, because the removal of an edge $e$ induces such a bipartition of $X$.  
An $X$-split  {with $\min\{|A|,|B|\}=1$} is called {\em trivial}. In the following, the set of all induced (non-trivial) $X$-splits of $T$ will be denoted by $\Sigma(T)$ (or $\Sigma^*(T)$, respectively). Note that two $X$-splits $\sigma_1=A|B$ and $\sigma_2=\widetilde{A}|\widetilde{B}$ are called \emph{compatible} if and only if at least one of the intersections $A \cap \widetilde{A}$, $A \cap \widetilde{B}$, $B \cap \widetilde{A}$ or $B \cap \widetilde{B}$ is empty. It is well-known (the so-called \emph{Buneman theorem} or \emph{splits-equivalence theorem}) that each set of pairwise compatible splits uniquely defines a tree, which can be found by the  \texttt{Tree Popping} algorithm \cite{meacham,Semple2003}.

Non-trivial splits can also be coded as so-called \emph{binary characters} {as follows: If $\sigma=A|B$, the corresponding binary character is simply the function $f:X\rightarrow \{0,1\}$ with 

\[ f(x)= \begin{cases}1 & \mbox{if }x\in A \\ 0 & \mbox{if } x \in B \end{cases}, \]

and this often gets denoted by the shorthand $f(1)f(2)\ldots f(n)$. Moreover, we may assume without loss of generality that $f(1)=1$, because we consider characters equivalent if they correspond to the same $X$-split, even if the roles of 0 and 1 are interchanged. We denote the set of non-trivial binary characters induced by a phylogenetic $X$-tree with $B^*(T)$.}

\begin{figure}
	\centering
	\includegraphics[scale=.25]{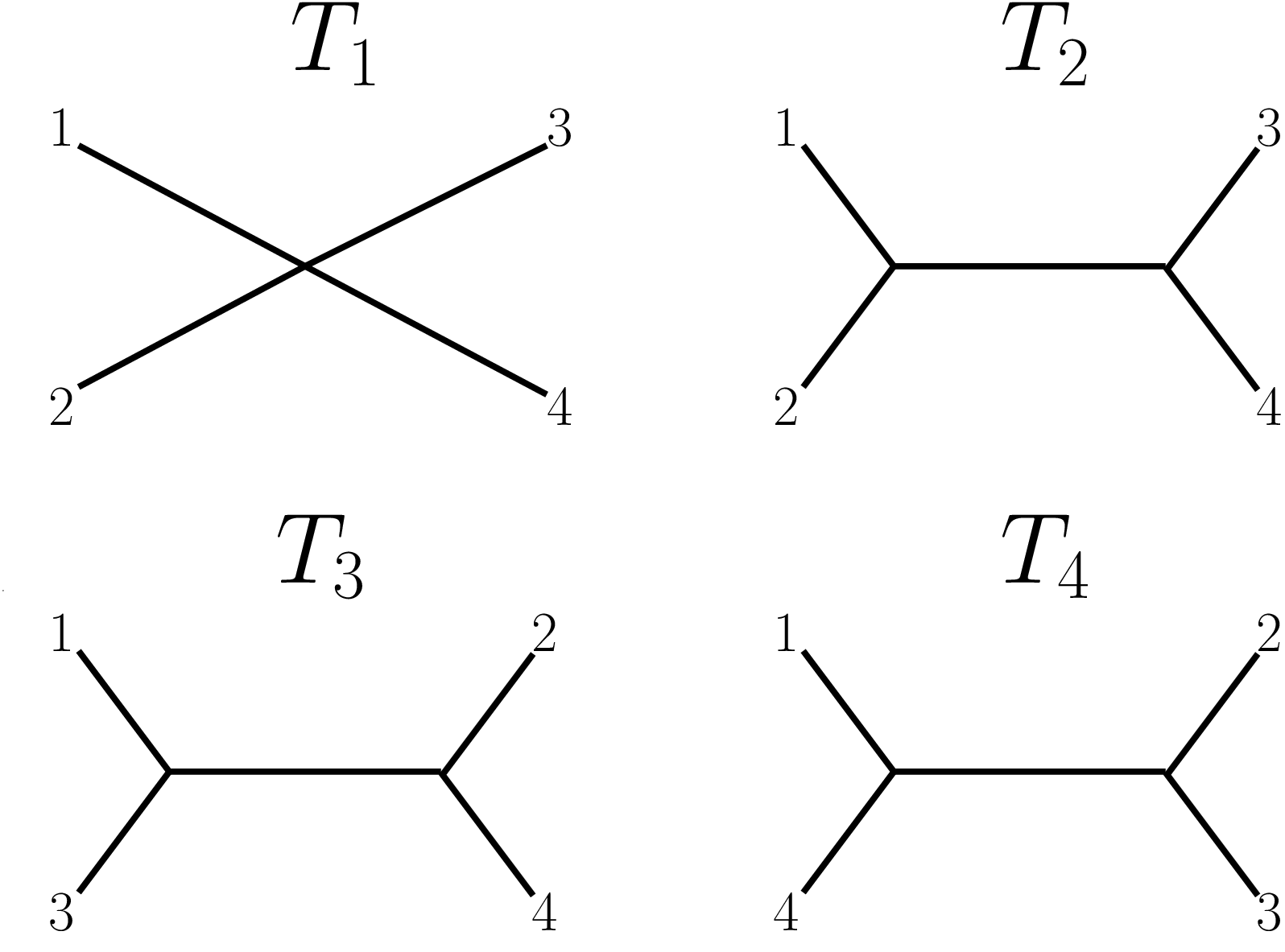}
	\caption{All phylogenetic $X$-trees for $X=\{1,2,3,4\}$. The Newick formats of the given trees are $T_1=(1,2,3,4)$, $T_2=((1,2),3,4)$, $T_3=((1,3),2,4)$ and $T_4=((1,4),2,3)$. Here, we have $\Sigma^*(T_1)=B^*(T_1)=\emptyset$ and  $\Sigma^*(T_2)=\{12 | 34\}$, and thus $B^*(T_2)=\{1100\}$, for instance. Moreover, we have $T_1\preceq T_2$, $T_1\preceq T_3$ and $T_1\preceq T_4$, but there is no such relation between $T_2$, $T_3$ and $T_4$.	}
	\label{fig:all4taxontrees}
\end{figure}

In the rooted setting, instead of considering splits  we consider  \emph{clusters}. A clade of a rooted phylogenetic tree $T$ is a pendant subtree $T'$ of  $T$, and a cluster is the set of leaves $Y \subseteq X$ of such a pendant subtree $T'$ of $T$. Note that every vertex of a rooted phylogenetic tree $T$ induces a cluster as it can be regarded as the root of such a subtree of $T$. Clusters of sizes $1$ and $n$ are called \emph{trivial}. We denote the set of all clusters of a rooted phylogenetic tree $T$ by $\mathcal{C}(T)$, and the set of its non-trivial clusters by $\mathcal{C}^*(T)$. Note that every cluster $Y\subsetneq X$  induces an $X$-split, namely $\sigma=Y | (X \setminus Y)$ (note that if $Y=X$, $Y$ does not induce an $X$-split, which is why the condition $Y\subsetneq X$ is crucial). In the following, we call two clusters $c_1, c_2$ \emph{compatible} if we have $c_1 \subseteq c_2$ or $c_2 \subseteq c_1$ or $c_1 \cap c_2=\emptyset$. Note that this necessarily implies that the splits induced by $c_1$ and $c_2$ are also compatible if and only if $c_1$ and $c_2$ are compatible.

{It is a direct consequence of the Buneman theorem that a rooted phylogenetic tree $T$ is fully determined by $\mathcal{C}^*(T)$, and $T$ can be reconstructed from  $\mathcal{C}^*(T)$ with the \texttt{BUILD} algorithm \cite{aho}. } 
The last concept we need to recall before we can turn our attention to consensus methods is the partial order $\preceq$. For two rooted (or unrooted) phylogenetic trees $T_1$ and $T_2$, we say $T_2$ \emph{refines} $T_1$ or $T_2$ \emph{is a refinement of} $T_1$ and denote this by $T_1 \preceq T_2$, whenever $\mathcal{C}(T_1) \subseteq \mathcal{C}(T_2)$ (or $\Sigma(T_1) \subseteq \Sigma(T_2)$, respectively, cf. Figure \ref{fig:all4taxontrees}). Note that a tree that has no refinement except for itself is called \emph{fully resolved}.

\subsection{Consensus methods}

We are now finally in the position to introduce the most important concept underlying this manuscript. 

\begin{definition}
A \emph{rooted consensus method} (resp.  \emph{unrooted consensus method}) is a function $\phi$ that, for every set $X$ of taxa and every number $k \ge 1$, associates with each profile of $k$ trees from $RP(X)$ (resp. $UP(X)$) a unique corresponding tree in $RP(X)$ (resp. $UP(X)$).
\end{definition}

Note that the trees occuring in a profile $\mathcal{P}$ of trees will subsequently often be referred to as \emph{input trees}, and the tree returned by a consensus method will then often be called the \emph{output tree}.

Following \cite{bryant2017can}, we call a consensus method $\phi$ \emph{regular} if it obeys the following three properties:

\begin{enumerate}
\item \textbf{Unanimity:} If the trees in $\mathcal{P}$ are all the same tree $T$, then $\phi(\mathcal{P})=T$.
\item \textbf{Anonymity:} Changing the order of the trees in $\mathcal{P}$ does not change $\phi(\mathcal{P})$.
\item \textbf{Neutrality:} Changing the labels on the leaves of the trees in $\mathcal{P}$ simply relabels the leaves of $\phi(\mathcal{P})$ in the same way.
\end{enumerate}

{The stated properties are needed to make the consensus biologically plausible and mathematically interesting. More detailed background concerning regular consensus methods can be found in \cite{bryant2017can}. }

A reasonable request for a consensus tree might be that every cluster in the output has some evidence in the input tree - e.g., is present in at least one tree in the input profile. Note that some consensus methods $\phi$ have this property, i.e., if $\mathcal{P}=(T_1,\ldots,T_k)$ is a profile of phylogenetic $X$-trees, for every cluster $c$ (or split $\sigma$, respectively) in $\phi(\mathcal{P})$, there is at least one tree $T_i$ that displays $c$ (or $\sigma$). Consensus methods with this property are referred to as \emph{co-Pareto} in the literature \cite{wilkinsoncotton,husondezulian,mcmorris}.

However, one of the primary aims of this paper is to consider and analyze the following two concepts, namely monotonicity, which has already been discussed in the literature, and non-contradiction, which is a property we will introduce in the present manuscript. We start with monotonicity.

\begin{definition}
Let $\phi$ be a consensus method. If, for any pair of profiles $\P=(T_1,\dots,T_k),\P'=(T_1',\dots,T_k')$ with $T_i \preceq T_i'$ for each $i \in \{1,...,k\}$, we have $\phi(\P) \preceq \phi(\P')$, then $\phi$ is referred to as \emph{monotonic}.
\end{definition}

Another natural property that might be desirable for a consensus method is that of being \emph{non-contradictory}. This can be viewed as a relaxation of the co-Pareto property: One might wish for a consensus tree which, at least, does not contradict \emph{every} tree in the input profile (even if the input profile does not contain a single tree that displays a certain clade or split). We now formally define this property.

\begin{definition}
    A consensus method is termed \emph{non-contradictory} if each cluster (or split, respectively) of the output tree is compatible with at least one tree in the input profile, where a cluster (or split) is called compatible with a tree if it is pairwise compatible with all its induced clusters (or splits, respectively).
\end{definition}

We now list three of the most frequently used consensus methods. 

\begin{definition} \label{def:consensus} Let $\mathcal{P}$ be a profile of trees $(T_1,\ldots,T_k)$. {Then the following rules define ways to form a consensus tree:}

\begin{itemize}
\item {\bf Strict consensus ($\mathtt{strict}$): } 
The strict consensus tree $\mathtt{strict}(\mathcal{P})$ contains precisely \emph{all} clusters (in the rooted setting), respectively all splits (in the unrooted setting) that are present in all trees $T_i \in \mathcal{P}$.
\item {\bf Loose (or semi-strict) consensus ($\mathtt{loose}$): } 
The loose consensus tree $\mathtt{loose}(\mathcal{P})$ contains precisely all clusters (or splits, respectively) that are present in at least one tree $T_i \in \mathcal{P}$ and that are not incompatible with any split induced by any tree in $\mathcal{P}$.
\item {\bf Majority rule consensus ($\mathtt{MR}$): } Let $p> 50$. Then, 
the majority-rule consensus tree 
$\mathtt{MR}_p(\mathcal{P})$ contains precisely all clusters (or splits, respectively) that are present in at least $p\%$ of the trees $T_i \in \mathcal{P}$. Note that whenever there  is no ambiguity concerning $p$ or whenever a statement holds for all possible choices of $p>50$, we may simply write $\mathtt{MR}(\mathcal{P})$ rather than $\mathtt{MR}_p(\mathcal{P})$.
\end{itemize}
\end{definition}

{For all three consensus methods mentioned in Definition \ref{def:consensus}, the respective consensus tree $\phi(\mathcal{P})$ exists and is unique. For majority rule (and thus also strict) consensus, this was shown in \cite{margush1981consensus}, and for loose consensus this follows directly from the definition.}

\begin{remark}\label{rem:gammaMRGamma} Let $\P$ be a profile of (rooted or unrooted) phylogenetic $X$-trees. Then, it can be easily seen by Definition \ref{def:consensus} that $\mathtt{strict}(\P) \preceq \mathtt{MR}(\P)$. 
\end{remark}

Next, we want to introduce two more classic consensus methods, namely the Adams and the Aho consensus, which are both only defined for rooted trees. Our definitions are based on \citep{bryant2017can}, but for further details, we refer the reader also to \citep{bryant}. 

We start with \textbf{Adams consensus}. In order to build the Adams consensus tree for a profile $\mathcal{P}$ of rooted phylogenetic $X$-trees, we start by considering the partition $\Pi(X)$, which is defined as the non-empty intersections of the maximal clusters of the trees in $\mathcal{P}$ other than $X$. Note that $\Pi(X)$ will correspond to the maximal clusters in the consensus tree. Once this partition $\Pi(X)$ is determined, we take each element of $\Pi(X)$ and recursively repeat this procedure for the respective subset of taxa until it has size one and thus cannot be refined anymore. This eventually produces a set of compatible clusters on $X$, which then can be assembled into a unique tree. An example for the Adams consensus tree is given in Figure \ref{AdamsAho}. 

Similarly, for \textbf{Aho consensus}, we construct a partition $\Pi(X)$ that also gets recursively refined, but in this case, $\Pi(X)$ equals the connected components of the graph $(X,E_{\mathcal{P}})$, where there is an edge $\{x_1,x_2\}  \in  E_{\mathcal{P}}$ precisely if there  exists $x_3 \in X \setminus \{x_1,x_2\}$ such that the subtree $((x_1,x_2),x_3)$ is contained in all trees of $\mathcal{P}$. An example for the Aho consensus tree is given in Figure \ref{AdamsAho}.

In both cases, the hierarchy induced by the recursive partitioning can be used to reconstruct a unique rooted tree (which we refer to as $\varphi_{Ad}(\mathcal{P})$ or $\varphi_{Aho}(\mathcal{P})$, respectively)  using the so-called BUILD algorithm \citep{aho}.

\begin{remark} By \cite[p. 613]{bryant2017can}, \enquote{All standard phylogenetic consensus methods (e.g., strict consensus, majority rule, loose consensus, and Adams consensus)}, are regular. It  is easy to see that this, too, applies to Aho consensus.
\end{remark}

We now turn our attention to two supertree methods, which also sometimes appear in the context of consensus trees.

\subsection{MRP and MRC} \label{sec:MRPMRCintro}

In the literature {one} can find various methods to construct supertrees from (multi)sets of input trees. One difference to our setting is that these input trees need not coincide in their taxon sets. However, in the supertree setting, too, input trees might come with conflicting information, in which case the supertree corresponds to some sort of consensus. This is why supertree methods have also been regarded as consensus methods by some authors  \citep{bryant, bininda2003, levasseur}, even if there is an ongoing debate as to whether this is justified \citep{binindabook,bininda2003}. 

The reason why it is not straightforward to regard supertree methods as consensus methods is that a supertree need not be unique. This problem can be overcome, e.g., by using an existing consensus method in order to summarize all supertrees into a single consensus tree \cite{bryant}.  Another argument against regarding supertree methods as consensus methods is that a supertree may contain splits (or clusters) that are not present in any of the input trees (even though this can also happen with some established consensus methods, see Figure \ref{AdamsAho}: There, cluster $\{4,5\}$ is contained in the Adams and Aho consensus trees but in none of the input trees). 
 However, in the present manuscript, we will not join this debate, but we will consider two {established} supertree methods as consensus methods in the case in which the respective supertree \emph{is} unique and test them for refinement stability: \emph{Matrix Representation with Parsimony}, or \emph{MRP} for short, and \emph{Matrix Representation with Compatibility}, or \emph{MRC} for short.\footnote{Note that -- as we only consider cases where the output of MRP and MRC is unique -- applying a regular consensus method like majority-rule to the output of MRP or MRC as suggested for instance in \cite{bryant} would not change anything due to unanimity. } Note that as opposed to the consensus methods introduced above, MRP and MRC both are only defined for unrooted trees as they deliver no information about the root position.

In order to understand MRP and MRC, we first need to introduce the concept of \emph{parsimony}. In this regard, an \emph{extension} of a binary character $f:X\rightarrow \{0,1\}$ on a phylogenetic $X$-tree $T$ with vertex set $V(T)$ is a function $g^f: V(T) \longrightarrow \mathcal \{0,1\}$, such that $g^f_{|_X}=f$, i.e., $g^f$ agrees with $f$ on the leaves of $T$ but also assigns states to inner vertices of $T$.
Now, the \emph{changing number} $ch(g^f,T)$ of an extension $g^f$ on a phylogenetic tree $T$ is simply the number of edges $e=\{u,v\}$ for which $g^f(u)\neq g^f(v)$. An extension $\widetilde{g^f}$ such that $ \widetilde{g^f} = \argmin\limits_{g^f} ch(g^f,T)$ (where the minimum is taken over all extensions of $f$ on $T$) is called \emph{minimal}, and the changing number of such a minimal extension is called \emph{parsimony score} of $f$ on $T$, denoted $ps(f,T)$. Thus, we have $ps(f,T)= \min\limits_{g^f} ch(g^f,T)$. The parsimony score of a character $f$ on a phylogenetic tree $T$ can for instance be calculated by the {well-known} \emph{Fitch-Hartigan algorithm} \cite{Fitch,Hartigan1973}. Moreover, the parsimony score of a (multi)set of characters, which is also often referred to as \emph{alignment} in evolutionary biology, is simply defined as the sum of the parsimony scores of all characters.  

Next, the \emph{maximum parsimony tree}, or \emph{MP tree} for short, for a (multi)set $S$ of characters on $X$ is a phylogenetic $X$-tree $T$ for which we have $ch(S,T)=ps(S,T)$, i.e., a tree which has minimal changing number for $S$ amongst all phylogenetic $X$-trees. Note that this tree need not be unique. 

Given a profile $\mathcal{P}=(T_1,\ldots,T_m)$ of phylogenetic $X$-trees, we consider the union of binary characters $\mathcal{B}(\mathcal{P}):=\bigcup\limits_{i=1}^m B^*(T_i)$. Note that biologists often regard the multiset $\mathcal{B}(\mathcal{P})$ as a so-called alignment, i.e., as an ordered sequence of characters. However, as the order of characters does not play a role for the methods considered in the present manuscript, it is sufficient for us to consider the union. However, note that here we understand the union in the sense of a multiset, i.e., as two phylogenetic trees $T_i$ and $T_j$ might have a split in common, they might contribute the same character to $\mathcal{B}$, in which case both copies are kept.
 Now, an MP tree of  $\mathcal{B}$ is called \emph{Matrix Representation with Parsimony tree} of $\mathcal{P}$, or \emph{MRP tree} of $\mathcal{P}$ for short. Again, the MRP tree need not be unique. In particular, it can easily be seen that if $T$ is an MRP tree for some profile $\P$, all of its refinements are MRP trees for $\P$, too (because adding additional edges to a tree cannot increase its parsimony score).

The second supertree method which we want to consider is \emph{Matrix Representation with Compatibility}, or \emph{MRC} for short. This method analyzes all characters of $\mathcal{B}(\P)$ and finds a maximum compatible subset. Here, a set $S$ of binary characters is called \emph{compatible} if all characters in $S$ are pairwise compatible, and two binary characters are called compatible if their corresponding splits are compatible. By the Buneman theorem, there is a unique tree $T$ corresponding to each such set of compatible characters, which is why we say that such a set \emph{induces} $T$. Now, MRC works as follows: After a maximum compatible subset of characters in $B(\mathcal{P})$ is found and the corresponding tree is constructed, this tree is called \emph{Matrix Representation with Compatibility tree} or \emph{MRC tree} for short. Note that there may be more than one MRC tree as more than one maximum set of compatible characters may exist in $\mathcal{B}(\P)$.

\section{Results}

\subsection{Established consensus methods}
In this section, we turn our attention to some established consensus methods and analyze them with regards to monotonicty and non-contradiction. 

\subsubsection{Monotonicity}

First, we turn our attention the question of which established consensus methods are actually monotonic and which ones are not. We start with a positive result, concerning strict and majority-rule consensus. Note that whenever not stated otherwise, the results hold both in the rooted as well as in the unrooted setting. 

For completeness, we cite the following result from the literature \cite{mcmorris2008characterization,mcmorris}. 
\begin{theorem}[Theorem 1 in \cite{mcmorris2008characterization}, Theorem 2 in \cite{mcmorris}]\label{thm_strictAndMLareRS} Strict consensus and majority-rule consensus are monotonic.
\end{theorem}

Now we turn our attention to the loose consensus method and state our first negative result. 

\begin{proposition} \label{prop_adamsaholooseNOTmon}
Loose consensus, Adams consensus and Aho consensus are \emph{not} monotonic. 
\end{proposition}

\begin{proof} We prove the assertion concerning $\mathtt{loose}$ by providing an explicit counterexample. Let $X=\{1,2,3,4\}$ and $T_1=((1,2),3,4)$ be either rooted or unrooted, and let $S=(1,2,3,4)$ be the (rooted or unrooted) star-tree, i.e., the unique tree on $X$ with only one inner vertex. Let $\mathcal{P}=(T_1,S)$. Then, for the loose consensus tree we have $\mathtt{loose}(\mathcal{P})=T_1$. However, if we refine $S$ to become $T_2=((2,3),1,4)$ and consider the profile $\mathcal{P}'=(T_1,T_2)$, we get $\mathtt{loose}(\mathcal{P}')=S$ (because the splits $12|34$ and $23|14$ are incompatible). So if we refine input tree $S$, we get a coarser consensus tree, namely $S$ instead of $T_1$, not a refined one. This completes the proof for $\mathtt{loose}$.

 Concerning Adams and Aho consensus, we prove the statement by presenting an explicit counterexample. Let $T_1=(((1,2,3),4),5)$, $T_2=((1,2),(3,4),5)$ and $T_2'=(((1,2),(3,4)),5)$, and let 
$\mathcal{P}=(T_1,T_2)$ and $\mathcal{P}'=(T_1,T_2')$. Note that $T_2'$ is a refinement of $T_2$, and hence any consensus method $\phi$ that is monotonic must have $\phi(\P) \preceq \phi(\P')$. However, the Adams consensus tree of $\mathcal{P}$ is $\varphi_{Ad}(\mathcal{P})=T_2$, and the Adams consensus of $\mathcal{P}'$ is tree $\varphi_{Ad}(\mathcal{P})'=(((1,2),3,4),5)$. It follows that the Adams consensus method is \emph{not} monotonic as $\varphi_{Ad}(\mathcal{P}) \not\preceq \varphi_{Ad}(\mathcal{P}')$.

Note that in this  example, Aho consensus coincides with the Adams consensus tree, so Aho consensus method is also \emph{not} monotonic, which completes the proof. 

\end{proof}

While Proposition \ref{prop_adamsaholooseNOTmon} can be seen as a negative result, there are cases when monotonicity can be guaranteed even for the loose, Adams and Aho consensus methods. This depends on certain properties of the input sets. We now analyze this further and start with the loose consensus method.

\begin{proposition} \label{prop_loose_linearordering} Let $\P=\{T_1,\ldots,T_k\}$ and let $\P'=\{T_1',\ldots,T_k'\}$ be such that $T_i\preceq T_i'$ for all $i\in \{1,\ldots,k\}$. Moreover, let $\P'$ be such that it permits a linear ordering, i.e., for all $i,j\in \{1,\ldots,k\}$ we either have $T_i'\preceq T_j'$ or $T_j'\preceq T_i'$. Let $\phi \in \{\mathtt{loose},\varphi_{Adams},\varphi_{Aho}\}$. Then we have: $\phi(\P) \preceq \phi(\P')$.  
\end{proposition}

\begin{proof} Let $\P$ and $\P'$ be as described in the proposition. Then, as $\P'$ allows for a linear ordering, we may assume without loss of generality that $T_i' \preceq T_{i+1}'$ for all $i=1,\ldots,k-1$. Note that this implies $T_k'$ is the most refined tree in $\P'$, and as all splits of $T_k'$ are necessarily compatible with all other trees in $\P'$, we must have $\mathtt{loose}(\P')=T_k'$. Moreover, the Adams consensus as well as the Aho consensus for two trees $\widetilde{T}$ and $\widehat{T}$ with $\widetilde{T}\preceq \widehat{T}$ will always return $\widetilde{T}$, we have $\varphi_{Adams}(\P')=\varphi_{Aho}(\P')=T_1'$, i.e., the coarsest tree in $\P'$ is  returned by these methods.

On the other hand, as we have $T_i\preceq T_i'$ for all $i=1,\ldots,k$ by assumption, we know that the set $\Sigma(\P)$ containing all splits occurring in any of the trees of $\P$ is a subset of $\Sigma(\P')$, the set of splits of $\P'$, i.e., $\Sigma(\P)\subseteq \Sigma(\P')$. By $T_i'\preceq T_k'$ for all $i=1,\ldots,k$, we know that $\Sigma(\P')=\Sigma(T'_k)=\Sigma(\mathtt{loose}(\P'))$. As by definition we must have $\Sigma(\mathtt{loose}(\P)) \subseteq \Sigma(\P)$, we conclude $\Sigma(\mathtt{loose}(\P)) \subseteq \Sigma(\P)\subseteq \Sigma(\P')=\Sigma(\mathtt{loose}(\P'))$, which shows that $\mathtt{loose}(\P)\preceq \mathtt{loose}(\P')$ as desired.  Moreover, again as above, $\varphi_{Adams}(\P)=\varphi_{Aho}(\P)=T^\star$, where $T^\star$ is the coarsest tree in $\P$. Let $i^\star\in \{1,\ldots,k\}$ be such that $T_{i^\star}=T^\star$. Then, we have: $\varphi_{Adams}(\P)=\varphi_{Aho}(\P)=T^\star=T_{i^\star} \preceq T_1 \preceq T_1' =\varphi_{Adams}(\P')=\varphi_{Aho}(\P')$, which completes the proof.
\end{proof}

So in summary, of the established consensus methods, strict and majority-rule are monotonic, whereas the loose, Adams and Aho consensus methods are not, but the latter at least have the monotonicity property if the input profiles are such that the refined one allows for a linear ordering. In Section \ref{sec:MRP} we will show that these monotonicity statements also apply to MRP and MRC. But before we do so, we turn our attention to another property that consensus methods might have, namely that of being non-contradictory.

\subsubsection{Non-contradiction}

Non-contradiction is another property that might be desirable for a biologically meaningful consensus method. It states that each cluster (or split) in the consensus tree must be compatible with at least one input tree. Non-contradiction can be regarded as a weaker version of being co-Pareto, i.e., of the requirement for certain clusters to be present in at least one input tree, as, e.g., required by loose consensus. While it might not always be necessary that all clusters found by a consensus tree need to be present in at least one input tree (as this might make the resolution of the consensus tree too coarse), the consensus tree is often required to at least not contain any clusters that contradict \emph{all} input trees, which is guaranteed if a consensus method is non-contradictory.

The main aim of this short subsection is to state and prove that this is a property that indeed various established consensus methods share, but not all of them.

\begin{proposition}\label{prop:MRnoncontra} 
Loose consensus, strict consensus and majority rule consensus are all non-contradictory, but Adams and Aho consensus are not.
\end{proposition}

\begin{proof} By Definition \ref{def:consensus}, loose consensus, strict consensus and majority rule consensus all lead to trees that only contain clusters (or splits) that are present in at least one tree. So each cluster (or split) in the respective consensus tree must be compatible with at least one tree in the input profile, namely with the one it is induced by. This completes the proof of the first statement. 

Concerning Adams and Aho consensus, consider trees $T_1$ and $T_2$ in Figure \ref{AdamsAho}. For $\mathcal{P}=(T_1,T_2)$, Adams and Aho consensus coincide, and $\varphi_{Ad}(\mathcal{P})=\varphi_{Aho}(\mathcal{P})=T_3$, but $T_3$ contains the clusters $\{2,3\}$ and $\{4,5\}$, both of which are incompatible with both $T_1$ and $T_2$. Hence neither Adams nor Aho consensus are non-contradictory, which completes the proof.
\end{proof}

\begin{figure}
    \centering
\includegraphics[scale=.3]{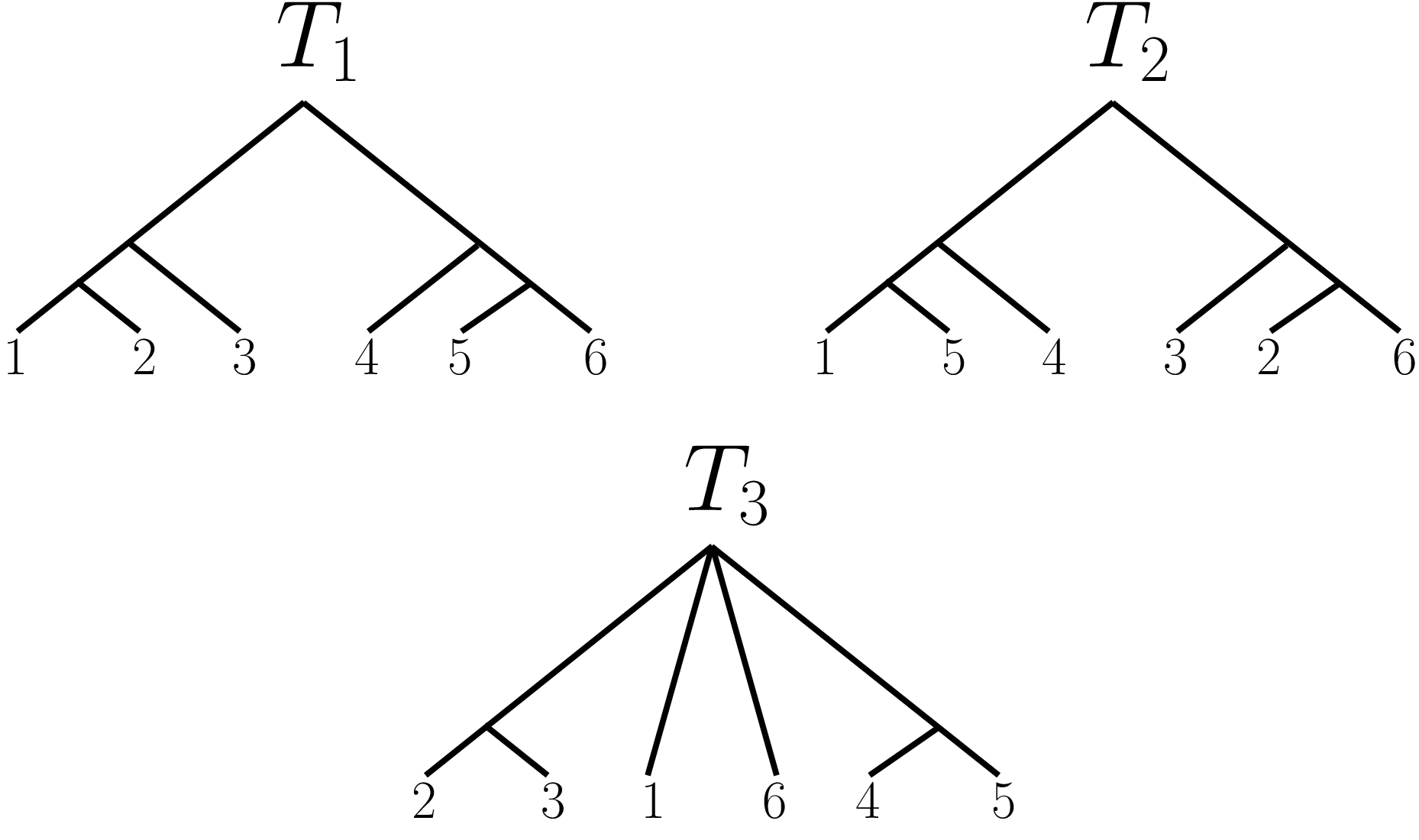}
	\caption{The Adams and Aho consensus methods are not non-contradictory as $\varphi_{Ad}(T_1,T_2)=\varphi_{Aho}(T_1,T_2)=T_3$.} 
	\label{AdamsAho}
	\end{figure}

We will now turn our attention to general the supertree methods MRP and MRC for the cases when they can easily be used as consensus methods, namely when their output is a unique tree.

\subsection{Results on the established supertree methods MRP and MRC used as consensus methods} \label{sec:MRP}

In this section, we take a closer look at MRP and MRC as consensus methods. In particular, we analyze them concerning monotonicity and non-contradiction. 

We start with the first main theorem of this section. 

\begin{theorem}\label{thm:mrpmrc}
MRP and MRC are \emph{not} monotonic.
\end{theorem}

\begin{proof}
Consider again Figure \ref{fig:all4taxontrees}. We consider profiles $\mathcal{P}_1=(T_1,T_1,T_2)$ and $\mathcal{P}_2=(T_3,T_3,T_2)$. We consider the non-trivial splits induced by these profiles and code them as binary characters as explained in Section \ref{sec:basicConcepts}. Then we can define $\mathcal{B}(\mathcal{P}_1)$ (and, analogously, $\mathcal{B}(\mathcal{P}_2)$)  as in Section \ref{sec:MRPMRCintro}. Then, for $\mathcal{P}_1$ we have $\mathcal{B}(\mathcal{P}_1)=\emptyset \cup \emptyset\cup \{1100\}=\{1100\}$, as the star tree $T_1$ has no non-trivial splits. However, if we refine both star trees such that we derive $\mathcal{P}_2$ (note that $T_3$ is a refinement of $T_1$), we derive $\mathcal{B}(\mathcal{P}_2)=\{1010,1010,1100\}$. 

We now start with considering MRP: 
It can be easily verified that  the unique MRP tree for $\P_1$ is $T_2$: We have $ps(1100,T_1)=ps(1100,T_3)=ps(1100,T_4)=2$, but $ps(1100,T_2)=1$, so in total $ps(\mathcal{B}(\mathcal{P}_1),T_2)<ps(\mathcal{B}(\mathcal{P}_1),T_i)$ for $i=1,3,4$. However, as $ps(1010,T_3)=1$ and $ps(1010,T_1)=ps(1010,T_2)=ps(1010,T_4)=2$, we also have that $ps(\mathcal{B}(\mathcal{P}_2),T_1)=ps(\mathcal{B}(\mathcal{P}_2),T_4)=2+2+2=6$, $ps(\mathcal{B}(\mathcal{P}_2),T_2)=1+2+2=5$ and  $ps(\mathcal{B}(\mathcal{P}_2),T_3)=2+1+1=4$. So in total, $T_3$ is the unique MRP tree for $\mathcal{P}_2$. This shows that the MRP tree of $\mathcal{P}_1$ does not get refined by that of $\mathcal{P}_2$, even though $\mathcal{P}_2$ is a refinement of $\mathcal{P}_1$. Thus, MRP is \emph{not} monotonic. 

Now let us consider MRC:
In $\mathcal{B}(\mathcal{P}_1)$, there is only one character, namely 1100, and this character induces tree $T_2$. Thus, $T_2$ is the unique MRC tree for $\mathcal{P}_1$. 
On the other hand, in $\mathcal{B}(\mathcal{P}_2)=\{1100,1010,1010\}$ we have both 1100 and 1010, which are incompatible (as the splits $12|34$ and $13|24$ are incompatible). Thus, not all three characters of $\mathcal{B}(\mathcal{P}_2)$ can be compatible. The maximum number of compatible characters in $\mathcal{B}(\mathcal{P}_2)$ is therefore 2. As no subset containing both 1100 and 1010 can be compatible, it turns out that the unique maximal subset of compatible characters is $\{1010,1010\}$, and this subset induces tree $T_3$. Therefore, $T_3$ is the unique MRC tree for $\mathcal{P}_2$. This shows that the  MRC tree for $\mathcal{P}_1$ does not get refined by that of $\mathcal{P}_2$, even though $\mathcal{P}_2$ is a refinement of $\mathcal{P}$. Thus, MRC is \emph{not}  monotonic. 

This completes the proof.
\end{proof}

Similarly to the previous section, where we showed that while monotonicity cannot be generally guaranteed for $\mathtt{loose}$, $\varphi_{Adams}$ and $\varphi_{Aho}$, it does hold in the case where the finer input profile allows for a linear ordering (cf. Proposition \ref{prop_loose_linearordering}, we want to consider this setting now for MRP and MRC.

\begin{proposition} \label{prop_MRPMRC_linearordering} Let $\P=\{T_1,\ldots,T_k\}$ and let $\P'=\{T_1',\ldots,T_k'\}$ be such that $T_i\preceq T_i'$ for all $i\in \{1,\ldots,k\}$. Moreover, let $\P'$ be such that it permits a linear ordering, i.e., for all $i,j\in \{1,\ldots,k\}$ we either have $T_i'\preceq T_j'$ or $T_j'\preceq T_i'$. 
Then we have: $MRC(\P)$ and  $MRC(\P')$ are unique trees, and we have  $MRC(\P)\preceq MRC(\P')$. Moreover, $MRC(\P')$ is then also an MRP tree for $\P$ and $\P'$, and  $MRC(\P)$ is also an MRP tree for $\P$. If additionally we have that $\P'$ contains a binary tree $T$, then $MRP(\P)=MRC(\P')=T$. 
\end{proposition}

\begin{proof}  Let $\P$ and $\P'$ be as described in the proposition. Then, as $\P'$ allows for a linear ordering, we may assume without loss of generality that $T_i' \preceq T_{i+1}'$ for all $i=1,\ldots,k-1$. Note that this implies $T_k'$ is the most refined tree in $\P'$, and as the split set $\Sigma(\P')$ of all splits in $\P'$ contains only elements which are pairwise compatible (by $T_i' \preceq T_{i+1}'$), we know by the Buneman theorem that there is a unique tree containing all splits from  $\Sigma(\P')$. Necessarily, this tree must be $T_k'$, which shows that $MRC(\P')=T_k'$, so $MRC(\P')$ is indeed a unique tree. Moreover, as we have $T_i\preceq T_i'\preceq T_k'$ for all $i=1,\ldots,k$, we know that $\Sigma(\P)\subseteq \Sigma(T_k')$, which implies that all splits in $\Sigma(\P)$ are compatible. Thus, again by the Buneman theorem, there is a unique tree $T$ with split set $\Sigma(\P)$, which necessarily is the unique MRC tree of $\P$. Moreover, by $\Sigma(T)=\Sigma(\P)\subseteq \Sigma(\P')=\Sigma(MRC(\P'))=\Sigma(T_k')$ we conclude $MRC(\P)\preceq MRC(\P')$ as desired.

On the other hand, we must have $ps(\mathcal{B}(\P'),T_k') \leq ps(\mathcal{B}(\P'),T_i')$ for all $i:T_i'\precneq T_k'$. Note that no other tree $\widetilde{T}$ can have $ps(\mathcal{B}(\P'),\widetilde{T}) <ps(\mathcal{B}(\P'),T_k')$, as $T_k'$ contains all splits of $\Sigma(\P')$. Thus, $T_k'\in MRP(\P')$. Moreover, as we know $T_i'\preceq T_k'$ for all $i=1,\ldots,k$, if $T_k'$ is binary, we know that $MRP(\P')=T_k'=MRC(\P')$ (as in this case, $MRC(\P')$ cannot be refined any further to generate more MRP trees). It remains to show that $T_k'=MRC(\P')$ is also an MRP tree for $\P$. In order to see this, note that as we have already seen that $MRC(\P)$ is a unique tree containing \emph{all} splits from $\Sigma(\P)\subseteq \Sigma(\P')=\Sigma(T_k')$, it necessarily is also an MRP tree for $\P$. Thus, also all of its refinements are MRP trees for $\P$, and as $MRC(\P)\preceq MRC(\P')$, we conclude that indeed $MRC(\P')$ is an MRP tree for $\P$. This concludes the proof.
\end{proof}

We will now turn our attention to non-contradiction, which shows a severe difference between MRP and MRC, as one of these methods turns out to be non-contradictory, while the other one is not. We start with the following negative result concerning MRP.

\begin{theorem} \label{thm:mrpnotnoncontr}
MRP is \emph{not} non-contradictory.
\end{theorem}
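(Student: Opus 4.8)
The plan is to prove that MRP is not non-contradictory by exhibiting an explicit profile whose unique MRP tree contains a split that is incompatible with \emph{every} input tree. The strategy mirrors the counterexample technique already used throughout this section (e.g. in Theorem~\ref{thm:mrpmrc}), but now the goal is subtler: I do not merely need the MRP output to differ from the inputs, I need it to induce a split $\sigma$ that clashes with all input trees simultaneously. First I would search for a small profile of unrooted trees on a taxon set $X$ (likely $|X|=4$ or $5$, since MRP works on unrooted trees and small cases keep the parsimony bookkeeping tractable) for which the aggregate character matrix $\mathcal{B}=\bigcup_i B^*(T_i)$ forces the maximum-parsimony optimum onto a tree bearing a ``compromise'' split that none of the inputs carry.

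The key steps, in order, are as follows. First, fix the concrete trees $T_1,\ldots,T_k$ and write down their non-trivial splits as binary characters exactly as in Section~\ref{sec:basicConcepts}, assembling $\mathcal{B}$. Second, identify the candidate split $\sigma$ that I want to appear in the MRP tree and verify by direct inspection that $\sigma$ is incompatible with each $T_i$ --- this is a purely combinatorial check on the four pairwise intersections from the compatibility definition. Third, and this is the computational heart, I would compute the parsimony score $ps(\mathcal{B},T)$ for each candidate phylogenetic $X$-tree $T$ (for small $|X|$ there are only finitely many tree shapes to enumerate, and by Buneman's theorem each is determined by its splits), using the additivity of the parsimony score over the characters in $\mathcal{B}$ together with the Fitch--Hartigan computation of $ps(f,T)$ for each individual character $f$. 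Fourth, I would show that the tree $T^\sigma$ containing $\sigma$ achieves the strictly smallest total score, so it is the \emph{unique} MRP tree; uniqueness matters because the manuscript only treats MRP as a consensus method when its output is unique.

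The main obstacle I anticipate is engineering the profile so that $\sigma$ is genuinely forced and uniquely optimal. Because MRP minimises a \emph{sum} of per-character scores, a split supported by no input tree can only win if the accumulated character data collectively ``pulls'' toward a grouping that no single tree realises; this typically requires carefully balancing how many characters favour each grouping so that the compromise topology beats every input-compatible topology. I would likely need to repeat some input trees (as was done with the profiles $\mathcal{P}=(T_1,T_1,T_2)$ in Theorem~\ref{thm:mrpmrc}) or choose characters whose pairwise conflicts make any ``faithful'' tree pay a parsimony penalty that the compromise tree avoids. The delicate part is ruling out \emph{ties}: I must verify that $T^\sigma$ is strictly better than all competitors, not merely tied, since a tie would invalidate the uniqueness of the MRP tree and hence the applicability of the example. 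Once strict optimality and the incompatibility of $\sigma$ with every $T_i$ are both confirmed, non-contradiction fails by definition, completing the proof.
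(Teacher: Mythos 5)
Your overall strategy is the same as the paper's (exhibit an explicit profile whose unique MRP tree induces a split incompatible with every input tree, verify optimality and uniqueness by exhaustive parsimony scoring), but as written your proposal has a genuine gap: it contains no actual counterexample, and the search space you commit to at the outset provably cannot contain one. On $|X|=4$ or $|X|=5$, every non-trivial $X$-split $A|B$ has $\min\{|A|,|B|\}=2$, so each character in $\mathcal{B}$ has parsimony score exactly $1$ on any tree inducing that split and exactly $2$ on every other tree (including the star tree). Hence $ps(\mathcal{B},T)=2|\mathcal{B}|-\#\{f\in\mathcal{B}:\, \sigma_f\in\Sigma(T)\}$, i.e.\ maximum parsimony reduces to maximizing the number of alignment splits displayed by $T$. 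A short case analysis then shows that any \emph{uniquely} optimal tree can only induce splits that occur in $\mathcal{B}$ (a tree carrying a zero-count split is always tied with or beaten by a tree that drops it), and any split occurring in $\mathcal{B}$ is induced by, hence compatible with, some input tree. So MRP restricted to $4$ or $5$ taxa is automatically non-contradictory whenever its output is unique, and your anticipated ``small case'' search would come up empty; one needs at least $6$ taxa, where splits of size $3$ make per-character scores range over $\{1,2,3\}$ and genuine ``compromise'' splits become possible.

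The paper's proof works on $7$ taxa with a profile of just two \emph{binary} trees $T_1,T_2$: the unique MP tree $T_3$ (verified by exhaustive search) contains the split $\sigma=1,2,3,4\,|\,5,6,7$, which lies in neither input tree. Note also the trick you would have had to rediscover: because $T_1$ and $T_2$ are binary, their split sets are maximal, so by Theorem \ref{thm:buneman} \emph{any} split not already induced by them is automatically incompatible with them --- no separate four-intersections check is needed, unlike the ``direct inspection'' step you planned. Your plan is a reasonable blueprint and mirrors the method of Theorem \ref{thm:mrpmrc}, but until the explicit profile, the uniqueness verification, and the incompatibility argument are actually produced (and produced on a taxon set large enough for them to exist), the statement is not proved.
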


\begin{proof}
Consider profile $\P=(T_1,T_2)$ consisting of the two trees $T_1$ and $T_2$ from Figure \ref{fig:mrpnotnoncontr}. Moreover, consider the corresponding alignment of these two trees as given in Table \ref{tab:mrpnotnoncontr}. It can be easily seen that neither $T_1$ nor $T_2$ are MP trees for this alignment; in fact, the alignment has parsimony score 13 on tree $T_3$ from Figure \ref{fig:mrpnotnoncontr}, but 14 on both $T_1$  and $T_2$, respectively (cf. Table  \ref{tab:mrpnotnoncontr}). Moreover, we performed an exhaustive search over all phylogenetic trees with seven leaves using the computer algebra system Mathematica \cite{Mathematica} in order to verify that $T_3$ is even the unique MP tree for this alignment.\footnote{Note that there are $(2n-5)!!=(2\cdot 7 -5)!!=945$ binary phylogenetic trees on $7$ taxa \cite[Prop. 2.1.4]{Semple2003}, all of which need to be considered. As finding the maximum parsimony tree is unfortunately an NP-complete problem \cite{foulds_graham_1982}, there is no efficient way of solving it (except if $P=NP$), but for 945 trees, of course this task is still easily doable for a computer, which is why we decided to use Mathematica.} However, $T_3$ contains the split $\sigma=1234|567$, which is not contained in any of the input trees $T_1$ and $T_2$. As $T_1$ and $T_2$ are binary, this (by the Buneman theorem) necessarily means that $\sigma$ is incompatible with both trees, because no more split can be added to said trees. This completes the proof.

\begin{figure}
    \centering
   \includegraphics[scale=0.3]{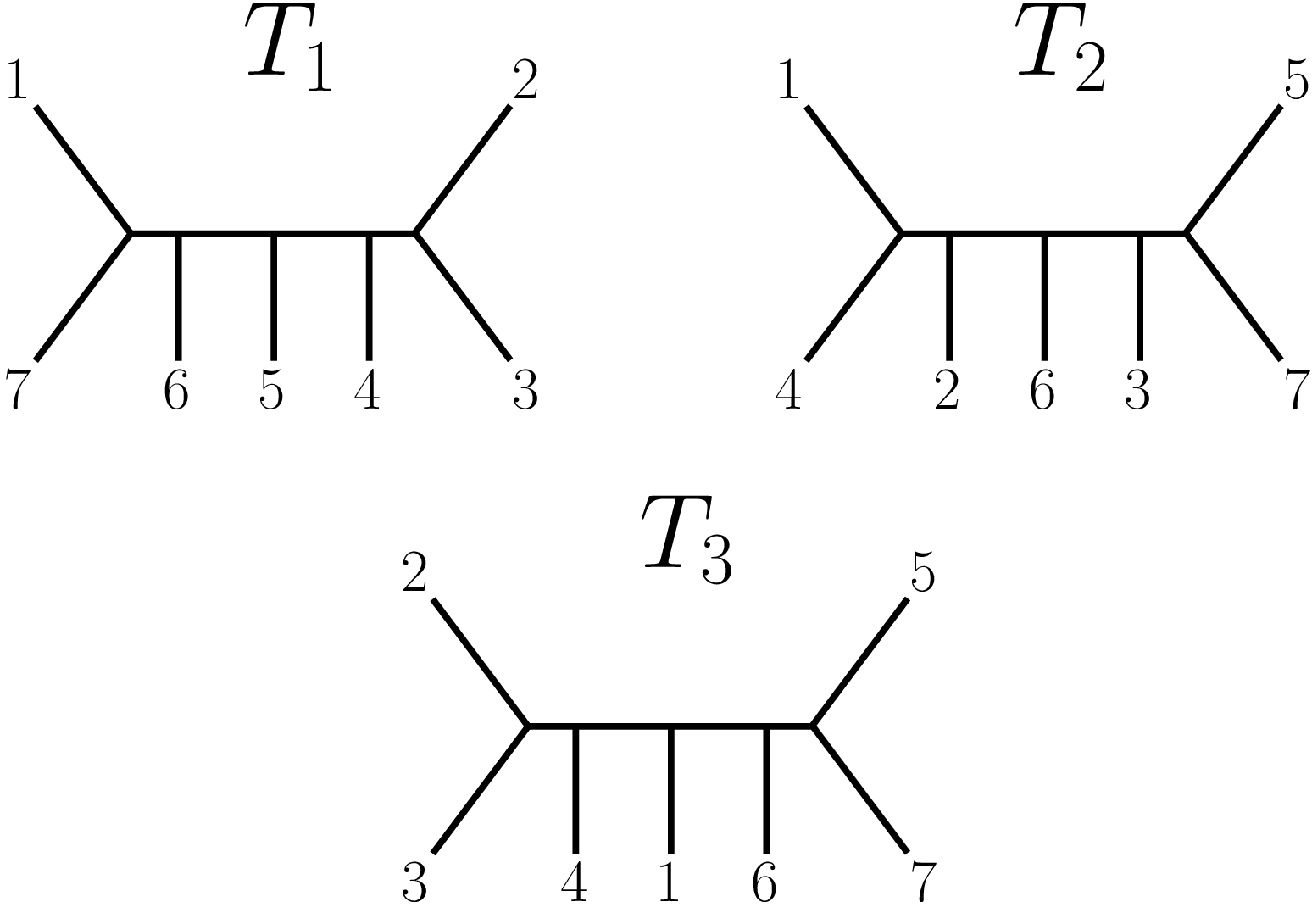}
    \caption{Trees $T_1$, $T_2$ and $T_3$ as needed for the proof of Theorem \ref{thm:mrpnotnoncontr}. The alignment for $\P=(T_1,T_2)$ is depicted in Table \ref{tab:mrpnotnoncontr}.}
    \label{fig:mrpnotnoncontr}
\end{figure}

\begin{table}\centering
\begin{tabular}{ll|llllllllll}
\cline{3-10}
                                                                                                                                      &       & \multicolumn{8}{c|}{alignment}                                          &                         &                                                                                                                                        \\ \cline{3-10}
                                                                                                                                      &       & 1 & 1 & 1 & \multicolumn{1}{l|}{1} & 1 & 1 & 1 & \multicolumn{1}{l|}{1} &                         &                                                                                                                                        \\
                                                                                                                                      &       & 0 & 0 & 0 & \multicolumn{1}{l|}{0} & 0 & 1 & 1 & \multicolumn{1}{l|}{1} &                         &                                                                                                                                        \\
                                                                                                                                      &       & 0 & 0 & 0 & \multicolumn{1}{l|}{0} & 0 & 0 & 0 & \multicolumn{1}{l|}{1} &                         &                                                                                                                                        \\
                                                                                                                                      &       & 0 & 0 & 0 & \multicolumn{1}{l|}{1} & 1 & 1 & 1 & \multicolumn{1}{l|}{1} &                         &                                                                                                                                        \\
                                                                                                                                      &       & 0 & 0 & 1 & \multicolumn{1}{l|}{1} & 0 & 0 & 0 & \multicolumn{1}{l|}{0} &                         &                                                                                                                                        \\
                                                                                                                                      &       & 0 & 1 & 1 & \multicolumn{1}{l|}{1} & 0 & 0 & 1 & \multicolumn{1}{l|}{1} &                         &                                                                                                                                        \\
                                                                                                                                      &       & 1 & 1 & 1 & \multicolumn{1}{l|}{1} & 0 & 0 & 0 & \multicolumn{1}{l|}{0} &                         &                                                                                                                                        \\ \hline
\multicolumn{1}{|l|}{\multirow{3}{*}{\rotatebox{90}{\mbox{$ps$}}}} & $T_1$ & 1 & 1 & 1 & 1                      & 2 & 3 & 3 & \multicolumn{1}{l|}{2} & \multicolumn{1}{l|}{14} & \multicolumn{1}{l|}{\multirow{3}{*}{\rotatebox{270}{\parbox{\textwidth}{total}}}} \\
\multicolumn{1}{|l|}{}                                                                                                                & $T_2$ & 2 & 3 & 3 & 2                      & 1 & 1 & 1 & \multicolumn{1}{l|}{1} & \multicolumn{1}{l|}{14} & \multicolumn{1}{l|}{}                                                                                                                  \\
\multicolumn{1}{|l|}{}                                                                                                                & $T_3$ & 2 & 2 & 1 & 1                      & 2 & 2 & 2 & \multicolumn{1}{l|}{1} & \multicolumn{1}{l|}{13} & \multicolumn{1}{l|}{}                                                                                                                  \\ \hline
\end{tabular}
\caption{The alignment resulting from \enquote{translating} trees $T_1$ and $T_2$  into binary characters and its corresponding parsimony scores on $T_1$, $T_2$ and $T_3$ from Figure \ref{fig:mrpnotnoncontr}.  }\label{tab:mrpnotnoncontr}
\end{table}

\end{proof}

Next, we present a positive result on MRC, before we turn our attention to \enquote{future-proofing} MRP and MRC in the sense presented in \cite{bryant2017can}, i.e., concerning the addition of new taxa. 

\begin{theorem}
MRC is co-Pareto and thus also non-contradictory.
\end{theorem}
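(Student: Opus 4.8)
The plan is to observe that MRC, despite selecting its splits by a compatibility criterion rather than by a frequency threshold, still draws every split of its output tree from the pool $\mathcal{B}=\cup_{i=1}^m B^*(T_i)$, which by construction contains only splits that are genuinely induced by input trees. In this respect the argument is close in spirit to the proof of Proposition \ref{prop:MRnoncontra} for $\Gamma$, $\gamma$ and $MR$: once we know the output splits are a subset of the input splits, non-contradiction is immediate, because a split induced by a tree is in particular compatible with that tree.

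Concretely, the first step is to fix a profile $\P=(T_1,\ldots,T_m)$ of unrooted phylogenetic $X$-trees (recall MRC is only defined in the unrooted setting), form $\mathcal{B}$ as in Section \ref{sec:MRPMRCintro}, and let $S\subseteq\mathcal{B}$ be the maximal compatible subset of characters selected by MRC, with $T$ the corresponding MRC tree. The key claim is that $\Sigma^*(T)=S$, i.e. the non-trivial splits of $T$ are \emph{exactly} the splits coded by $S$, with no additional splits introduced. This follows from the uniqueness clause of Theorem \ref{thm:buneman}: the collection consisting of $S$ together with all trivial splits is a pairwise-compatible family containing the trivial splits, so it equals $\Sigma(T)$ for the unique tree $T$ it determines, whence $\Sigma^*(T)=S$.

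The second step is to conclude. Let $\sigma$ be any split of $T$. If $\sigma$ is trivial, it is compatible with every tree on $X$ and there is nothing to prove. Otherwise $\sigma\in\Sigma^*(T)=S\subseteq\mathcal{B}$, so by definition of $\mathcal{B}$ we have $\sigma\in B^*(T_i)$ for some $i$, i.e. $\sigma$ is a non-trivial split of the input tree $T_i$. Since all splits induced by a single phylogenetic tree are pairwise compatible (again Theorem \ref{thm:buneman}), $\sigma$ is compatible with $T_i$; in fact it is contained in $T_i$, which is even stronger than required. Hence every split of $T$ is compatible with at least one input tree, which is precisely the definition of non-contradiction.

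There is no substantial obstacle here; the single point that must be stated carefully is the identity $\Sigma^*(T)=S$, since non-contradiction would fail if the tree-popping reconstruction could somehow introduce splits not present in the chosen compatible subset. The uniqueness assertion in Buneman's theorem rules this out, so the proof reduces to the two short steps above.
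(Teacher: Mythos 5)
Your proof is correct and follows essentially the same route as the paper's: the output splits of MRC are by construction a subset of the input splits, hence each is contained in (and in particular compatible with) at least one input tree. Your explicit justification via the uniqueness clause of Theorem \ref{thm:buneman} that tree popping introduces no extra splits is a careful spelling-out of what the paper asserts directly by definition.
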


\begin{proof}
By definition, the MRC tree is built by taking a maximum set of compatible splits of the input tree and combining them into a unique tree (which can be done using the Tree Popping algorithm mentioned in Section \ref{sec:basicConcepts}). This means that the MRC tree by definition only employs splits that occur in at least one input tree, i.e., MRC is co-Pareto. Note that this in particular implies that MRC is non-contradictory. This completes the proof.
\end{proof}

\subsubsection{``Future-proofing'' MRP and MRC concerning the addition of new taxa}

In order to complement the results of \cite{bryant2017can}, we briefly consider MRP and MRC in the light of adding more taxa (rather than more splits or clusters). This scenario is for instance relevant concerning the discovery of new species. The results stated in the following and the construction of the particular examples were strongly inspired by \citep{nonher}. As stated above, both MRP and MRC are strictly speaking not consensus methods as their output may be a set of trees rather than a single tree. However, we will here focus on a profile $\mathcal{P}$ for which both MRP and MRC produce a unique tree, namely $\widetilde{\mathcal{P}}=(\widetilde{T_1},\widetilde{T_1},\widetilde{T_2},\widetilde{T_2},\widetilde{T_2},\widetilde{T_3})$, where $\widetilde{T_1},\widetilde{T_2}, \widetilde{T_3}$ are as depicted in Figure \ref{fig_MRPMRCaddTaxon}. Now, we add taxon 5 to all trees in $\widetilde{\mathcal{P}}$ such that we get profile $\mathcal{P}= (T_1,T_1,T_2,T_3,T_4,T_5)$, where again $T_1, \ldots, T_5$ are as depicted in Figure \ref{fig_MRPMRCaddTaxon}. In the following, let the notation $T\setminus Y$ for a phylogenetic $X$-tree $T$ and a proper subset $Y$ of $X$ denote the phylogenetic $(X\setminus Y)$-tree that can be obtained from $T$ by deleting all leaves of $Y$ and suppressing all resulting degree-2 vertices as well as all newly occurring (i.e., unlabelled) leaves. Then, we have: 
$T_1\setminus\{5\}=\widetilde{T_1}$, $T_2\setminus\{5\}=T_3\setminus\{5\}=T_4\setminus\{5\}=\widetilde{T_2}$, and $T_5\setminus\{5\}=\widetilde{T_3}$. Translating all non-trivial splits of $\mathcal{P}$ and $\widetilde{\mathcal{P}}$ to characters leads to alignments $\mathcal{A}_\mathcal{P}$ and $\mathcal{A}_{\widetilde{\mathcal{P}}}$ as depicted in Figures \ref{fig_alignments1} and \ref{fig_alignments2}.

\begin{figure}
	\centering
	\includegraphics[scale=.35]{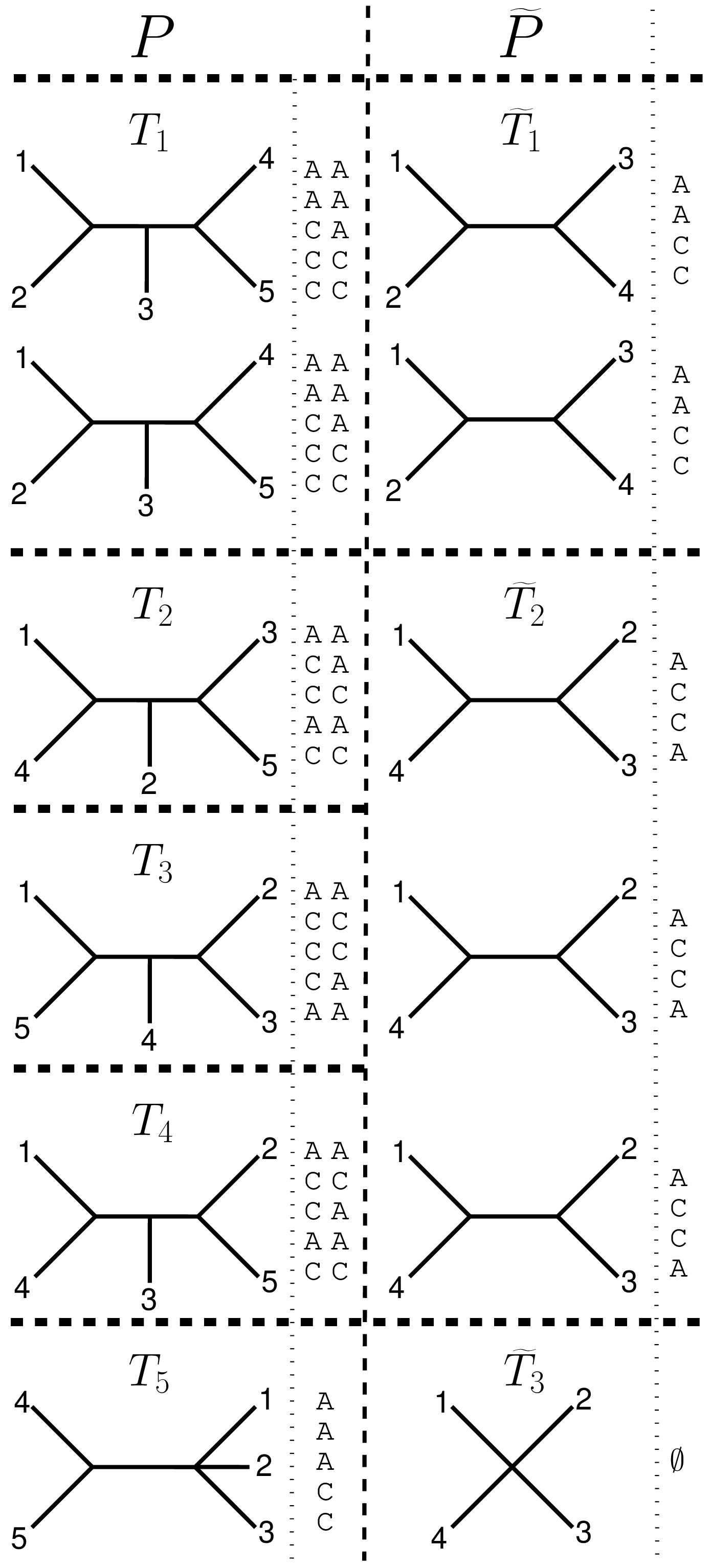}

	\caption{Profile $\widetilde{\mathcal{P}}$, which can be turned into profile $\mathcal{P}$ by adding information on taxon 5, leading to alignments $\mathcal{A}_{\widetilde{\mathcal{P}}}$ and $\mathcal{A}_\mathcal{P}$ as depicted in Figures \ref{fig_alignments1} and \ref{fig_alignments2}, respectively. $T_1$ is the unique MRP and MRC tree for $\mathcal{P}$, while $\widetilde{T_2}$ is the unique MRP and MRC tree for $\widetilde{T}$.
 	}
	\label{fig_MRPMRCaddTaxon}
\end{figure}

 We now argue that $T_1$ is both the unique MRP and MRC tree for $\mathcal{A}_\mathcal{P}$, and that $\widetilde{T_2}$ is both the unique MRP and MRC tree for $\mathcal{A}_{\widetilde{\mathcal{P}}}$. First note that resolving a tree more cannot increase the parsimony score, but it can potentially decrease it, i.e., for $T \preceq T'$ and a binary character $f$, we have $l(T',f)\leq l(T,f)$. This holds  because if multiple change edges are incident to the same vertex, replacing this vertex by a new edge might make it possible to have a change on the new edge rather than on multiple incident edges. Thus, when hunting for a tree minimizing the parsimony score, it is sufficient to consider binary trees. Now, as every binary character on four or five taxa requires at most 2 changes on any tree (it can always be realized with changes on the two edges incident to leaves that are not in the majority state), for a given character and a given tree with $n\leq 5$, there are only two possibilities: in case of compatibility, the parsimony score is 1, otherwise it is 2. So there are no different degrees of incompatibility (higher parsimony scores) for certain characters than for others, which is why it is clear that MRP and MRC necessarily coincide in this example: MRP will simply return the binary tree whose splits correspond to the characters occurring in total most frequently in $\mathcal{A}_{\P}$ and $\mathcal{A}_{\widetilde{\P}}$, which is precisely the MRC tree. While finding the MRC set in $\mathcal{A}_{\P}$ is a bit more involved, finding it in $\mathcal{A}_{\widetilde{\P}}$ is simple, as for four taxa, there are only three possible non-trivial splits, all of which are pairwise incompatible. Two of these splits' corresponding characters are present in  $\mathcal{A}_{\widetilde{\P}}$, and due to their incompatibility, MRC will simply pick the one that occurs most frequently, which is $ACCA$, leading to tree $\widetilde{T}_2$. \sout{Similarly, MRC will choose $T_1$ for $\mathcal{A}_{\P}$}.

We now briefly describe why $T_1$ is the unique MRC tree for $A_\mathcal{P}$. MRC requires finding a maximum compatible subset of characters displayed by the input trees. Namely, this is $3$ copies of $AAACC$, $2$ copies each of $AACCC$ and $ACCAC$, and $1$ copy each of $AACAC$, $ACCCA$, $ACCAA$ and $ACAAC$. Tree $T_1$ displays a subset of $5$ of these ($3$ copies of $AAACC$ and $2$ copies each of $AACCC$), so it suffices to show that this is the unique tree that displays a compatible subset of size $5$ or more. This will follow from several observations. Firstly, $ACCAC$ is incompatible with both $AAACC$ and $AACCC$, and secondly the pair $AAACC$ and $AACCC$ are not both compatible with any of the remaining characters. It follows that either our maximum subset consists of 3 copies of $AAACC$ and 2 copies of $AACCC$, or our subset contains all copies of at most one of $AAACC,AACCC$ and $ACCAC$ together with some subset of the remaining characters (of which there are one of each). Of these, $AAACC$ is only compatible with $ACCAA$ (a maximum size $4$ subset), $AACCC$ is only compatible with $AACAC$ (a maximum size $3$ subset) and while $ACCAC$ is compatible with $AACAC$, $ACCAA$ and $ACAAC$ (potentially a  size $5$ subset), $AACAC$ and $ACCAA$ are incompatible, leading again to a maximum size $4$ subset. It follows that the unique maximal subset of size $5$ is 3 copies of $AAACC$ and 2 copies of $AACCC$, which is uniquely displayed by $T_1$.

For the reasons explained above, these trees are then also the unique MRP trees in this case. 

However, as $\widetilde{T_2}$ is not a subtree of $T_1$, this shows that even in the case where MRP and MRC agree and additionally lead to a unique tree, additional taxa can change the predetermined relationships of the species in question. Thus, MRP and MRC are not `future-proof' in the sense of \cite{bryant2017can}.

\begin{figure}[ht]
\centering {\Large$\mathcal{A}_{\widetilde{P}}:$}
\parbox{1.5in}{ 
\begin{tabular}{cccccc}
1: &A & A & \bf{A} & \bf{A} & \bf{A} \\
 2:& A & A & \bf{C} & \bf{C} & \bf{C}  \\
 3: &C & C & \bf{C} & \bf{C} & \bf{C}\\
 4:& C & C & \bf{A} & \bf{A} &\bf{A}
\end{tabular}}
\caption{Alignment $\mathcal{A}_{\widetilde{\mathcal{P}}}$, which has both a unique MRP tree as well as a unique MRC tree: both of them equal $\widetilde{T_2}$ as depicted in Figure \ref{fig_MRPMRCaddTaxon}. The unique maximum set of compatible characters in $\mathcal{A}_{\widetilde{\mathcal{P}}}$ is hightlighted in bold.  }
\label{fig_alignments1}
\end{figure} 

\begin{figure}[ht]
\centering {\Large$\mathcal{A}_{\mathcal{P}}:$}
\parbox{3in}{ 
\begin{tabular}{cccccccccccc}
1: &\bf{A} & \bf{A} & \bf{A} & \bf{A} & A & A & A & A& A&A&\bf{A}\\
 2:& \bf{A} & \bf{A} & \bf{A} & \bf{A} & C & A & C& C& C&C&\bf{A}\\
 3: &\bf{C} &\bf{A} & \bf{C} & \bf{A} & C & C & C& C& C&A&\bf{A}\\
 4:& \bf{C} & \bf{C} & \bf{C} & \bf{C} & A & A & C& A &A&A&\bf{C}\\
 5: & \bf{C} & \bf{C} & \bf{C} & \bf{C} & C & C & A& A&C &C & \bf{C} 
\end{tabular}}
\caption{Alignment $\mathcal{A}_{\mathcal{P}}$, which has both a unique MRP tree as well as a unique MRC tree: both of them equal $T_1$ as depicted in Figure \ref{fig_MRPMRCaddTaxon}. The unique maximum set of compatible characters in $\mathcal{A}_{\mathcal{P}}$ is highlighted in bold.  }
\label{fig_alignments2}
\end{figure}

Next, we take a look at general consensus methods, i.e., we do no longer only consider established ones.

\subsection{General consensus methods}

\subsubsection{Monotonicity}

In this section, we gather some general properties of monotonic consensus methods. On the one hand, this will allow us to gain further insight into majority rule and strict consensus, two of the most frequently used methods in phylogenetics, as both are monotonic by Theorem \ref{thm_strictAndMLareRS}. On the other hand, however, general knowledge on monotonicity also allows for conclusions on what to hope for concerning future consensus methods -- what properties can you wish for if you are aiming at a monotonic method?

We have already seen that there are consensus methods that are not monotonic, such as the loose consensus. However, one can also construct  other examples of consensus methods which are not monotonic. For example, consider the following: Define the rather trivial method in which the star tree is always returned unless all trees in the profile are the same tree, in which case that tree is returned. Then this method is not monotonic. To see this, suppose the first profile consists of two identical non-star trees that are not fully resolved. If we then form the second profile by resolving exactly one of the trees more, the consensus becomes the star tree (and thus becomes coarser than it previously was). 

However, we start this section by showing the fundamental property that necessarily all monotonic and unanimous consensus methods refine the strict consensus, in the sense that such a consensus method applied to a given profile $\P$ will always return a refinement of $\mathtt{strict}(\P)$. Note that by Theorem \ref{thm_strictAndMLareRS} this applies, for instance, to majority rule.

\begin{theorem}
\label{RefStrict}
Let $\phi$ be any monotonic, unanimous consensus method and $\mathtt{strict}$ the strict consensus method. Let $\mathcal{P}$ be a profile of $k$ trees. Then we have: $\mathtt{strict}(\P) \preceq \phi(\P)$. In particular, if a cluster $c$ (or split $\sigma$ in the unrooted case) is induced by all trees in $\P$, then $c$ (or $\sigma$) is also induced by $\phi(\P)$.
\end{theorem}

\begin{proof} 
Let the strict consensus of $\P= (T_1,...,T_k )$ be some tree $T$, i.e., $T=\mathtt{strict}(\P)$. Then every tree in $\P$ contains each cluster  (or each split, respectively) of $T$. It follows that every tree in $\P$ is a refinement of $T$, and so given the profile $Q:=(T,...,T)$ consisting of $k$ copies of $T$ (where $k\geq 1$), we have that $T \preceq T_i$ for each $i$ in $\{1,...,k\}$, and as $\phi(Q) = T$ by unanimity and $\phi(Q) \preceq \phi(P)$ by monotonicity, we have $T \preceq \phi(P)$. This completes the proof.
\end{proof}

Theorem \ref{RefStrict} shows that the strict consensus is a coarse version of \emph{all} possible consensus trees induced by unanimous and monotonic methods. We now turn our attention to the loose consensus and prove that for a certain scenario, it is a refined version of \emph{all} possible consensus trees induced by unanimous and monotonic methods.

\begin{theorem} \label{lem_LooseIsFinest}
Let $\phi$ be any monotonic, unanimous consensus method, and let $\mathtt{loose}$ be the loose consensus method. Let $\P$ be a profile of $k$ trees such that each cluster (or each split, respectively) of every tree in $\P$ is compatible with all trees in $\P$. Then $\phi(\P) \preceq \mathtt{loose}(\P)$.
\end{theorem}

\begin{proof}
Let the loose consensus of $\P=(T_1,...,T_k)$ be the tree $T:=\mathtt{loose}(\P)$, noting that by definition of loose consensus, every cluster (or split, respectively) of each tree $T_i$ in $\P$ is contained in the clusters (or splits) of $T$ as all these clusters (or splits) are by assumption compatible with all trees in $\P$.  It follows that, given the profile $Q$ consisting of $k$ copies of $T$, we have that $T_i \preceq T$ for each $i$ in $\{1,...,k\}$, and as $\phi(Q) = T$ by unanimity and $\phi(P) \preceq \phi(Q)=T=\mathtt{loose}(\P)$ by monotonicity, we have $\phi(P) \preceq \mathtt{loose}(\P)$, which completes the proof.
\end{proof}

Note that Theorem \ref{lem_LooseIsFinest} is not quite as strong as Theorem \ref{RefStrict} in the following sense: While Theorem \ref{lem_LooseIsFinest} implies that the strict consensus tree is the coarsest refinement of any monotonic and unanimous consensus method's output tree regardless of any additional conditions, Theorem \ref{lem_LooseIsFinest} requires each cluster (or split) of a given profile $\P$ to be compatible with all trees in $\P$ in order to guarantee that the loose consensus refines all mentioned consensus methods' output trees. The fact that this condition is crucial and cannot be dropped is demonstrated by Example \ref{ex_LooseIsNotFinest}.

\begin{example} \label{ex_LooseIsNotFinest} For the unrooted case, consider again Figure \ref{fig:all4taxontrees}. Let $\P = (T_2,T_2,T_3)$, i.e., $\P$ employs two copies of $T_2$ and one copy of $T_3$. For the rooted case, we introduce a root on the inner edges of $T_2$ and $T_3$, respectively. It can easily be seen that in both cases, we have $\mathtt{MR}(\P)=T_2$, whereas $\mathtt{loose}(\P)$ is the star tree (i.e., $T_1$ in Figure \ref{fig:all4taxontrees}; in the rooted case the only inner vertex of $T_1$ is then the root). So in particular, we have $\mathtt{MR}(\P) \not\preceq \mathtt{loose}(\P)$. Since $\mathtt{MR}$ is unanimous (it is even regular, cf. \cite{bryant2017can}) and monotonic by Theorem \ref{thm_strictAndMLareRS}, this shows that the conditions of Theorem \ref{lem_LooseIsFinest} cannot be dropped. 
\end{example}

\par\vspace{0.5cm}
However, as we will now show, if there is a cluster (or split, respectively) that is incompatible with a cluster (or split) that is compatible with \emph{all} trees in a profile $\P$, then this cluster cannot be contained in the output tree of any regular and monotonic consensus method.

\begin{proposition} \label{prop_incompatible}
Suppose $\phi$ is a monotonic, unanimous  consensus method. Let $\P=(T_1,\ldots,T_k)$ be a profile of $k$ rooted (or unrooted) phylogenetic $X$-trees, and let $c_1$ and $c_2$ be two clusters (or $\sigma_1$ and $\sigma_2$ be two splits) such that $c_1$ (or $\sigma_1$) is compatible with all trees in $\P$, while $c_2$ (or $\sigma_2$) is not compatible with $c_1$ (or  $\sigma_1$, respectively). Then, $\phi(\P)$ does not contain $c_2$ (or $\sigma_2$).
\end{proposition}

\begin{proof} To see that the statement hold, suppose $\phi(\P)$ contained $c_2$ (or $\sigma_2$). We then refine all trees in $\P$ to contain $c_1$ (or $\sigma_1$), forming a new profile $\P'$ (this must be possible as  $c_1$ (or $\sigma_1$, respectively) is compatible with all trees in $\P$ by assumption). Then by definition of the strict consensus, $\mathtt{strict}(\P')$ contains $c_1$ (or $\sigma_1$). By Theorem \ref{RefStrict}, this implies $\phi(\P')$ contains $c_1$ (or $\sigma_1$). Thus, by monotonicity, $\phi(\P) \preceq \phi(\P')$. But this is impossible, since this would imply that $\phi(\P')$ contains both $c_1$ and $c_2$ (or $\sigma_1$ and $\sigma_2$), but these are incompatible by assumption. So this is a contradiction, which shows that $c_2$ (or $\sigma_2$) cannot be contained in $\phi(\P)$ to begin with. This completes the proof.
\end{proof}

As we will now show, the previous proposition implies that if there are two clusters (or splits) that are incompatible with one another, \emph{neither} one of them can be contained in the tree generated by a regular and monotonic consensus method if both of them are compatible with \emph{all} input trees. 

\begin{corollary}\label{cor_incompsplits}
Suppose $\phi$ is a regular, monotonic consensus method. Let $\P=(T_1,\ldots,T_k)$ be a profile of $k$ rooted (or unrooted) phylogenetic $X$-trees, and let $c_1$ and $c_2$ be two clusters (or $\sigma_1$ and $\sigma_2$ be two splits) compatible with all clusters (splits) of all trees in $\P$, but not with each other. Then $\phi(\P)$ contains neither $c_1$ nor $c_2$ (or neither $\sigma_1$ nor $\sigma_2$, respectively).
\end{corollary}

\begin{proof} As $c_1$ (or $\sigma_1$) is compatible with all trees in $\P$, by Proposition \ref{prop_incompatible}, $\phi(\P)$ cannot contain $c_2$ (or $\sigma_2$). Swapping the roles of $c_1$ and $c_2$, however, as now they are \emph{both} compatible with all trees in $\P$, yields that by the same argument, $\phi(\P)$ cannot contain $c_1$ (or $\sigma_1$). This completes the proof.
\end{proof}

Note that while the conditions of Corollary \ref{cor_incompsplits} may seem somewhat restrictive, they can actually quite easily be met. For instance, consider a profile containing multiple non-binary trees with the same unresolved cluster, e.g., let $\mathcal{P}=(T_1,\ldots,T_k$) be such that each $T_i$ contains subtree $T$ consisting of one inner vertex connected to its three only leaves $1$, $2$ and $3$. So each $T_i$ contains the cluster $c=\{1,2,3\}$, but none of the clusters $c_1=\{1,2\}$, $c_2=\{1,3\}$ and $c_3=\{2,3\}$. Any pair of the three latter clusters will then fulfill the requirements: These clusters are pairwise incompatible, but they are all compatible with all input trees. Thus, according to the above corollary, none of them can be contained in the output tree of a regular and monotonic consensus method.

So if two clusters (or splits) are incompatible with one another and if they are both compatible with all input trees, they cannot be contained in the output of any regular and monotonic consensus method. Again, note that by Theorem \ref{thm_strictAndMLareRS}, this for instance applies to majority rule and strict consensus, i.e., to two of most frequently used consensus methods. Another direct consequence of Proposition \ref{prop_incompatible} is the following corollary.

\begin{corollary} \label{cor:compT}
Suppose $\phi$ is a regular, monotonic consensus method. Let $\P=(T_1,\ldots,T_k)$ be a profile of $k$ rooted (or unrooted) phylogenetic $X$-trees, and let $T$ be some tree that is a refinement of all trees in $\P$. Then $\phi(\P)$ can consist only of clusters (or splits, respectively) compatible with $T$.
\end{corollary}

\begin{proof}
    Let $c$ be a cluster (or $\sigma$ be a split) not compatible with $T$. Then $c$  (or $\sigma)$ must be incompatible with at least one cluster $\widetilde{c} \in \mathcal{C}^*(T)$  (or split $\widetilde{\sigma} \in \Sigma^*(T)$). Then, as $T_i \preceq T$ for all $i=1,\ldots,k$ by assumption, $\widetilde{c} $ (or $\widetilde{\sigma}$) is compatible with $T_i$ for all $i=1,\ldots,k$. Thus, by Proposition \ref{prop_incompatible}, $c$ (or $\sigma$) cannot be contained in $\phi(\P)$. This completes the proof.
\end{proof}

{A final consequence of Proposition \ref{prop_incompatible} is the following corollary.}

\begin{corollary} \label{cor_reviewer}
{Suppose $\phi$ is a regular, monotonic consensus method and let $\P=(T_1,\ldots,T_k)$ be a profile of $k$ rooted (or unrooted) phylogenetic $X$-trees. Then any cluster (or split) in $\phi(P)$ is compatible with every cluster (or split) in $\mathtt{loose}(P)$.}
\end{corollary}

\begin{proof} Let $\phi$ be a regular, monotonic consensus method and let $\P=(T_1,\ldots,T_k)$ be a profile of $k$ rooted (or unrooted) phylogenetic $X$-trees. By definition, $\mathtt{loose}(\mathcal{P})$ contains only clusters (or splits) compatible with \emph{all} trees in $\mathcal{P}$. Thus, any cluster (or split) induced by $\mathtt{loose}(\mathcal{P})$ can play the role of $c_1$ (or $\sigma_1$) in Proposition \ref{prop_incompatible}, which implies that indeed, $\phi(\mathcal{P})$ cannot contain any cluster $c_2$ (or split $\sigma_2$) incompatible with this fixed choice of $c_1$ ($\sigma_1$). This completes the proof.
\end{proof}

As we have seen in Theorem \ref{thm_strictAndMLareRS}, majority rule and strict consensus are monotonic. Moreover, they are known to be regular \cite{bryant2017can}. We have also seen that not all methods are monotonic, and also not all methods are regular -- so is it possible that majority rule consensus and strict consensus are the \emph{only} consensus methods that have both properties? This would be mathematically interesting because it would imply that these two properties already give a  a full characterization for such methods. However, as we will later on see in Theorem \ref{thm:infinitelymany}, this is unfortunately not the case -- in fact, there are even infinitely many such methods. As we will see later on, the same is still true even if we enforce non-contradiction.

\subsubsection{Non-contradiction}

Before we turn our attention to the main result of this section, we will present a fundamental insight into consensus methods that are both non-contradictory and monotonic. In particular, we will show that under these circumstances, the output tree contains only clusters (or splits, respectively) that are already present in the input profile and thus is co-Pareto. Note that this is not automatically the case: non-contradiction requires only \emph{compatibility} with at least one input tree, not containment.

\begin{theorem} Let $\phi$ be a monotonic and non-contradictory consensus method. Let $\mathcal{P}=(T_1,\ldots,T_k)$ be a profile of phylogenetic $X$-trees. Then for every cluster $c$ (or split $\sigma$, respectively) in $\phi(\mathcal{P})$, there is at least one tree $T_i$ that displays $c$ (or $\sigma$){, i.e., $\phi$ is co-Pareto.}
\label{t:CIsPresent}
\end{theorem}

Before we can prove Theorem \ref{t:CIsPresent}, we need one more lemma.

\begin{lemma}\label{lem_refine} Let $T$ be a phylogenetic $X$-tree with split set $\Sigma(T)$ (or cluster set $\mathcal{C}(T)$ in the rooted case) and let $\sigma \not\in \Sigma(T)$ (or $c \not\in \mathcal{C}(T)$) be compatible with all splits in $\Sigma(T)$ (or all clusters in $\mathcal{C}(T)$, respectively). Then, there is a refinement $T'$ of $T$ which is incompatible with $\sigma$ (or $c$, respectively).
\end{lemma}

\begin{proof} We start with the rooted case. Let $T$, $\sigma$ and $c$ be as described in the theorem. We first consider the rooted case. Let $\widetilde{T}$ be the unique tree induced by $\mathcal{C}(T)\cup \{c\}$ (note that $\widetilde{T}$ must exist due to compatibility). Then $\widetilde{T}$ has an edge $e$ that $T$ does not have, namely the one inducing $c$. Contracting this edge turns $\widetilde{T}$ into $T$ and also leads to a vertex $v$, which is ancestral to all leaves of $c$. However, as $c$ is not a cluster of $T$, at least two subtrees $T_1$, $T_2$ rooted at children of $v$ in $T$ contain only leaves of $c$. But as $v$ does not induce $c$, there must be another pending subtree $T_3$ adjacent to $v$ which does not contain any leaf from $c$. Let $X_i$ denote the leaf set of $T_i$ for $i=1,2,3$, respectively. Then, let $c'=X_1\cup X_3$ and let $T'$ be the unique tree induced by $\mathcal{C}(T) \cup \{c'\}$. Then, $c$ and $c'$ are not compatible by construction and $T'$ is a refinement of $T$. This completes the proof for the rooted case.

The unrooted case follows analogously: We first add the split that is compatible with all splits in $T$ but not contained in its split set. Contracting this edge identifies a unique vertex in $T$ which, if we consider it to be the root, reduces the rest of the proof to the rooted case, which we have already shown. This completes the proof.  
\end{proof}

We are now finally in a position to prove Theorem \ref{t:CIsPresent}.

\begin{proof}[Proof of Theorem \ref{t:CIsPresent}]
Assume the theorem is false, i.e., assume that there are a monotonic and non-contradictory consensus method and a profile $\mathcal{P}=(T_1,\ldots,T_k)$ such that there is a cluster $c$ (or a split $\sigma$, respectively) in $\phi(\mathcal{P})$ not displayed by any $T_i$ (for $i=1,\ldots,k$). Then, as $\phi$ is non-contradictory, $c$ (or $\sigma$) is compatible with at least one tree $T_i$ (for $i\in \{1,\ldots,k$\}). By Lemma \ref{lem_refine} there is a refinement $T_i'$ of $T_i$ which is incompatible with $c$ (or $\sigma$). We replace $T_i$ by $T_i'$ in $\mathcal{P}$, and we repeat this procedure as long as there are trees compatible with $c$. We call the resulting profile $\mathcal{P}'$. As $\phi$ is monotonic, we know that $\phi(\mathcal{P}) \preceq \phi(\mathcal{P}')$. Thus, as $c$ (or $\sigma$) is contained in  $\phi(\mathcal{P})$, it is also contained in $\phi(\mathcal{P}')$. However, as $c$ is by construction incompatible with all trees in $\mathcal{P}'$, this contradicts the fact that $\phi$ is non-contradictory. This shows that the assumption was wrong and thus completes the proof.
\end{proof}

While we have already seen that the $\mathtt{MR}$ and $\mathtt{strict}$ consensus methods are regular, monotonic and non-contradictory (and all these are desirable properties of any consensus method, after all), whereas $\mathtt{loose}$ is not (as it lacks monotonicity), the question remains if this combination makes $\mathtt{strict}$ and $\mathtt{MR}$ unique or if there are other consensus methods -- biologically justified or not -- that also share all these properties. Indeed, Day and McMorris \cite[Section 1.2]{day2003axiomatic} divide major axiomatic consensus theory results into several categories, and a uniqueness result would be an excellent example of a result of the form \emph{"Consensus [method] $C$ is the unique rule $C$ if and only if $C$ has the desirable properties $X, Y,$ and $Z$"}. Unfortunately, this is not the case. The following theorem, which is the main result of this section, states that there are in fact infinitely many such consensus methods, so these properties are not unique to the mentioned ones.

\begin{theorem}\label{thm:infinitelymany}
There are infinitely many consensus methods {other than $\mathtt{strict}$ and $\mathtt{MR}$} that are regular, monotonic and non-contradictory.
\end{theorem}

\begin{proof} We consider both the rooted and the unrooted cases simultaneously. In order to do so, we choose $n\in \mathbb{N}_{\geq 4}$ and then fix a pair of trees $T_1$ and $T_2$ for each case. First, for the rooted case, consider the two phylogenetic $X$-trees $T_1$ and $T_2$ depicted in Figure \ref{fig_infinitelymany}, where $X=\{1,\ldots,n\}$. Then, for the unrooted case, consider the exact same trees, but suppress the root in each tree, i.e., replace the two edges incident to the degree-2 root vertex by a single edge. 

Note that the exact structure of the trees depends on whether $n=2m+2$ or $n=2m+3$ for $m \in \mathbb{N}$, i.e., whether $n$ is even or odd. Note that $T^*_{1\ldots m}$ and $T^*_{m+1\ldots 2m}$ are identical if leaf labels are disregarded, i.e., they correspond to the same tree shape, say $T^*$, but $T^*$ can be chosen to be \emph{any} rooted \emph{binary} tree (note that this implies that there are $WE(m)$ many choices for $T^*$, where $WE(m)$ denotes the $m^{th}$ Wedderburn-Etherington number \cite[Sequence A001190]{oeis}). Note that it is important that $T^*$ is binary as this implies that it cannot be refined any further. In particular, $T_2$ is already fully refined as all its subtrees are binary.

We now define a consensus method $\phi$ as follows: If a profile $\mathcal{P}$  contains some refinement $T_1'$ of $T_1$, i.e., $T_1\preceq T_1'$ (note that equality is possible) together with $T_2$, i.e., $\mathcal{P}=(T_1',T_2)$ or $\mathcal{P}=(T_2,T_1')$, then we define $\phi(\mathcal{P}):=T_2$.  Let $\pi$ be a permutation on $X$. For all phylogenetic $X$-trees $T$, let $\pi(T)$ denote the version of $T$ in which the leaves are permuted according to $\pi$. Then, we also define $\phi(\pi(T_1'),\pi(T_2)) := \pi(T_2)$. 

In all other cases, i.e., for all other input profiles $\widetilde{\mathcal{P}}$, we define $\phi(\widetilde{\mathcal{P}}):=\mathtt{strict}(\mathcal{\widetilde{P}})$. 

We now argue that $\phi$ is regular, monotonic and non-contradictory. Unanimity follows by the unanimity of $\mathtt{strict}$, and neutrality follows by $\mathtt{strict}$ for all profiles that are not $\mathcal{P}=(T_1',T_2)$ or $\mathcal{P}=(T_2,T_1')$, and for these profiles $\mathcal{P}$ it follows by our definition of $\phi$ as well as our consideration of permutations $\pi$. To check anonymity, note that $T_1'$ is either such that $T_1' \preceq T_2$ or such that $T_1'$ and $T_2$ have completely different tree shapes (in particular if $T_1$ gets refined in a way that does not form $T_2$, namely by combining leaf $2m+1$ with one of the copies of $T^*$ to form a new subtree). This shows that the only time when a permutation can modify $T_1'$ to give $T_2$ and $T_2$ to give $T_1'$ is when $T_1'$ and $T_2$ are isomorphic. However, in this case we actually have by unanimity that $\phi$ returns precisely this tree, so for anonymity, there remains nothing to show. Thus, $\phi$ is regular.

To see that $\phi$ is actually monotonic, let $\P^1=(T^1_1,\ldots,T^1_k)$ and $\P^2=(T^2_1,\ldots,T^2_k)$ such that $T^1_i\preceq T^2_i$ for all $i\in \{1,\ldots,k\}$. We need to show $\phi(\P^1)\preceq \phi(\P^2)$. Therefore, consider two cases: If $k=2$ and if $\P^1=(\pi(T_1'),\pi(T_2))$ (or $\P^1=(\pi(T_2),\pi(T_1'))$) for some $T_1'$ with $T_1\preceq T_1'$, by our definition of $\phi$, we have $\phi(\P^1) = \pi(T_2)$. However, as we know $T^1_i\preceq T^2_i$ for $i\in \{1,2\}$, we must have that $\P^2=(\pi(\widetilde{T}_1'),\pi(T_2))$ (or $\P^2=(\pi(T_2),\pi(\widetilde{T}_1'))$, respectively), where $\widetilde{T}_1'$ is a tree such that $T_1\preceq T_1'\preceq \widetilde{T}_1'$, i.e., it is a refinement of $T_1$. Concerning $T_2$, note that as $T_2$ is binary, any tree refining $T_2$ is necessarily identical to $T_2$. Thus, as $\P_2$ contains precisely $\pi(T_2)$ as well as a refinement of $\pi(T_1)$, by definition of $\phi$, we have $\phi(\P^2)=\pi(T_2)$. Thus, $\phi(\P^1)=\phi(\P^2)$, and thus trivially we have $\phi(\P^1)\preceq \phi(\P^2)$ as desired. 

In any other case, i.e., if $\P^1=(T^1_1,\ldots,T^1_k)$ and $\P^2=(T^2_1,\ldots,T^2_k)$ such that $T^1_i\preceq T^2_i$ for all $i\in \{1,\ldots,k\}$ but $\P^1\neq (\pi(T_1'),\pi(T_2))$ (and $\P^1\neq(\pi(T_2),\pi(T_1'))$) for all refinements $T_1'$ of $T_1$, then, by definition of $\phi$, have $\phi(\P^1)=\mathtt{strict}(\P^1)$. Now we consider two subcases.

\begin{itemize}\item Assume $\P^2\neq (\pi(T_1'),\pi(T_2))$ (and $\P^2\neq(\pi(T_2),\pi(T_1'))$) for all refinements $T_1'$ of $T_1$. In this case, by definition of $\phi$, we have $\phi(\P^2)=\mathtt{strict}(\P^2)$, and as $\P_1\preceq \P_2$ and as $\mathtt{strict}$ is monotonic by Theorem \ref{thm_strictAndMLareRS}, we must have $\phi(\P_1)\preceq \phi(\P_2)$ as desired.
\item The most involved case is the one in which $\P^1=(T^1_1,\ldots,T^1_k)$ and $\P^2=(T^2_1,\ldots,T^2_k)$ such that $T^1_i\preceq T^2_i$ for all $i\in \{1,\ldots,k\}$ but $\P^1\neq (\pi(T_1'),\pi(T_2))$ (and $\P^1\neq(\pi(T_2),\pi(T_1'))$) for all refinements $T_1'$ of $T_1$ and thus $\phi(\P^1)=\mathtt{strict}(\P^1)$, but $\P^2=(\pi(T_1'),\pi(T_2))$ for some refinement $T_1'$ of $T_1$, so  we have by definition of $\phi$ that $\phi(\P^2)=\pi(T_2)$. Moreover, this case implies $k=2$ and $\P^1=(T^1_1,T^1_2)$ with $T^1_1 \preceq \pi(T_1')$ and $T^1_2 \preceq \pi(T_2)$. We now show that $\phi(\mathtt{\P^1})=\mathtt{strict}(\P^1)$ contains only splits (or clusters, respectively) which are present in $\pi(T_2)$: Recall that all splits (or clusters)  present in $\mathtt{strict}(\P^1)$ must be present in both $T_1^1$ and $T^1_2$, so in particular the set of splits (clusters) of $\mathtt{strict}(\P^1)$ is a subset of the set of splits (clusters) of $T^1_2$. Moreover, note that by $T^1_2 \preceq \pi(T_2)$, all  clusters of $T^1_2$ are also present in $\pi(T_2)$. This implies that $\phi(\P^1)=\mathtt{strict}(\P^1) \preceq \pi(T_2) = \phi(\P^2)$ as desired.\end{itemize}

 Thus, in all cases we get $\phi(\P^1)\preceq \phi(\P^2)$, which in  summary shows that $\phi$ is monotonic.

Last, we need to show that $\phi$ is non-contradictory. To see this, recall that by Proposition \ref{prop:MRnoncontra}, $\mathtt{strict}$ is non-contradictory, and if a profile $\mathcal{P}$ consists of $\pi(T_1')$ and $\pi(T_2)$ for some permutation $\pi$, then $\phi$ returns $\pi(T_2)$. So in particular, all splits of $\phi((\pi(T_1'),\pi(T_2)))=\pi(T_2)$ are not only compatible with at least one of the input trees, but even contained in it, as $\pi(T_2)$ is contained in the input profile, which implies that $\phi$ is co-Pareto. 

So any consensus method $\phi$ of the type described here is regular, monotonic and non-contradictory. But we can construct infinitely many such methods by varying the trees depicted in Figure \ref{fig_infinitelymany}: We can, for instance, consider different values for $n$, so the cardinality of the set of such examples (and thus of such consensus methods) equals the cardinality of $\mathbb{N}$. This completes the proof.

\end{proof}

\begin{figure}
    \centering
	\includegraphics[scale=.3]{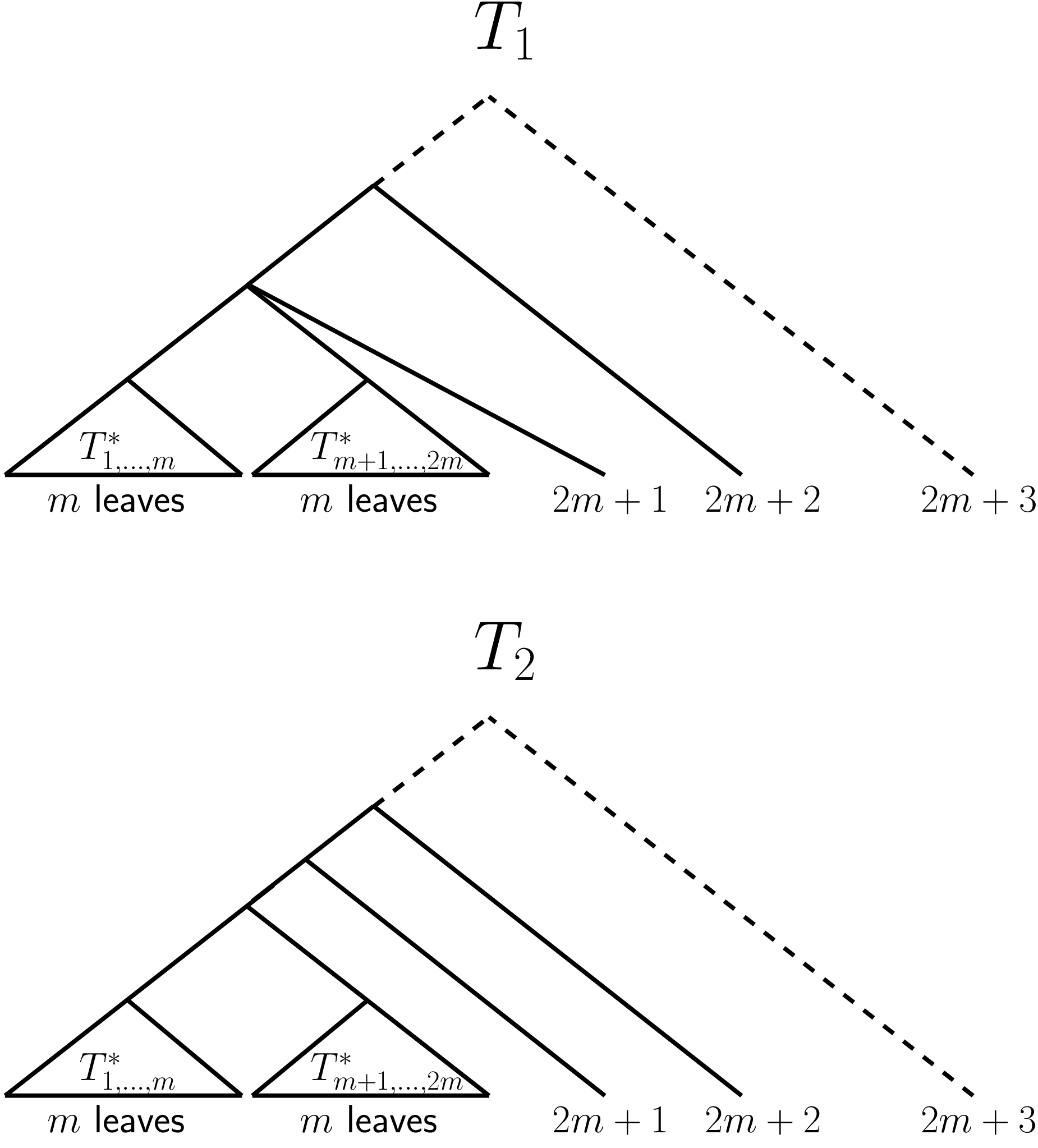}
	\caption{In the proof of Theorem \ref{thm:infinitelymany}, we have $\mathcal{P}=(T_1',T_2)$ on $n=2m+2$ or $n=2m+3$ leaves, respectively, where $T_1'$ is a refinement of $T_1$ as depicted here. $T^*_{1\ldots m}$ and $T^*_{m+1\ldots 2m}$ are both trees with the same binary tree shape $T^*$, but the leaves of $T^*_{1\ldots m}$ are bijectively labelled by $\{1,\ldots, m\}$ and the leaves of $T^*_{m+1\ldots 2m}$ are bijectively labelled by $\{m+1,\ldots, 2m\}$.  Note that the dashed parts of the trees only exist if $n$ is odd. 
 	}
	\label{fig_infinitelymany}
\end{figure}

\section{Discussion and outlook}
We investigated two  properties which are desirable for sensible phylogenetic consensus methods, namely monotonicity and non-contradiction, where the latter is a property that was newly introduced in the present manuscript. We analyzed seven established consensus methods and found that only two of them, namely strict consensus and majority-rule consensus, are monotonic, whereas the other five (loose consensus, Adams consensus, Aho consensus, MRP and MRC) are not. We further proved that strict, majority-rule and loose consensus as well as MRC are all non-contradictory, whereas Adams and Aho consensus as well as MRP are not. 

We also analyzed the implications that the properties of monotonicity and non-contradiction have on any consensus method, established or not, and, somewhat surprisingly, found that there are infinitely many consensus methods which are regular, monotonic and non-contradictory. As of the established consensus methods we analyzed, only two (namely strict and majority-rule) fit that description, it would be interesting to see if there are other biologically plausible consensus methods which have these properties. This is a possible area for future investigations.

\section*{Acknowledgements} Both authors wish to thank two anonymous reviewers for their helpful suggestions on a previous version of this manuscript. Moreover, MF wishes to thank the joint research project \textbf{\emph{DIG-IT!}}
supported by the European Social Fund (ESF), reference: ESF/14-BM-A55-0017/19, and the Ministry of Education, Science and Culture of Mecklenburg-Vorpommern, Germany. MF was also supported by the project ArtIGROW, which is a part of the WIR!-
Alliance \enquote{ArtIFARM – Artificial Intelligence in Farming}, and gratefully acknowledges the Bundesministerium f\"ur
Bildung und Forschung (Federal Ministry of Education and Research, FKZ: 03WIR4805) for financial support. MH thanks Prof. Dr. W. F. Martin, the Volkswagen Foundation 93\_046 grant and the ERC Advanced Grant No. 666053 for their support during this research. MH also thanks the Heinrich Heine University of D\"usseldorf, Germany, the University of Melbourne and Western Sydney University where some part of the work presented here was carried out.

\newpage
\section*{Declarations}

\begin{itemize}
    \item[]{\bfseries Funding:} The work of Mareike Fischer was supported by the European Social Fund (ESF), reference: ESF/14-BM-A55-0017/19, and the Ministry of Education, Science and Culture of Mecklenburg-Vorpommern, Germany. Moreover, Mareike Fischer's work was also supported by the project ArtIGROW, which is a part of the WIR!-Alliance \enquote{ArtIFARM – Artificial Intelligence in Farming}, and gratefully acknowledges the Bundesministerium f\"ur
Bildung und Forschung (Federal Ministry of Education and Research, FKZ: 03WIR4805). The work of Michael Hendriksen was supported by the Volkswagen Foundation 93\_046 grant and the ERC Advanced Grant No. 666053. 
     \item[]{\bfseries Conflicts of interest/Competing interests:} The authors declare that they have no competing interests and no conflicts of interest that are relevant to the contents of this article. 
     \item[]{\bfseries Availability of data and material:} Not applicable. 
     \item[]{\bfseries Code availability:} Not applicable. 
     \item[]{\bfseries Authors' contributions:} All authors contributed equally to the present manuscript. All mathematical analyses presented here were performed were performed by Mareike Fischer and Michael Hendriksen. Both authors wrote parts of the first draft. All authors read and approved the final manuscript.
     \item[]{\bfseries Ethics approval:} Not applicable.
     \item[]{\bfseries Consent to participate:} All authors give their consent to the publication of this manuscript and declare that they will participate in the publication process.
     \item[]{\bfseries Consent for publication:} All authors give their consent for publishing this manuscript.
    
\end{itemize}

\bibliographystyle{elsarticle-num} 
 \bibliography{References}

\end{document}